\documentclass[letterpaper,twocolumn,10pt]{article}
\usepackage{usenix}

\usepackage{amsmath}
\usepackage{graphicx}
\usepackage{latexsym}
\usepackage{bm}
\usepackage{enumitem} 
\usepackage{amsfonts}
\usepackage{listings} 
\usepackage{tikz}
\usepackage{comment}
\usepackage{caption}
\usepackage{array}
\usepackage{xspace}
\usepackage{psfrag}
\usepackage{subfigure}
\usepackage{color}
\usepackage{algorithmicx}
\usepackage{algpseudocode}
\usepackage{alltt}
\usepackage{multirow}
\usepackage{array}
\usepackage{wasysym}
\usepackage{rotating}
\usepackage{datetime}
\usepackage{tabularx}
\usepackage{mathrsfs}
\usepackage{mdwtab}
\usepackage{pifont}
\usepackage{amsthm}
\usepackage{epsfig}
\usepackage{endnotes}
\usepackage{booktabs}
\usepackage{makecell}
\usepackage{graphicx}
\usepackage{epstopdf}
\usepackage[para,online,flushleft]{threeparttable}
\usepackage{tikz}

\usepackage{algorithm}
\usepackage{algpseudocode}

\usepackage{url}
\newcolumntype{Y}{>{\centering\arraybackslash}X}

\newtheorem{theorem}{Theorem}
\newtheorem{lemma}[theorem]{Lemma}

\newcommand{\myname}[1]{\texttt{AnyPro}\xspace}

\newcommand{\minyuan}[1]{#1}

\newcommand{\yuning}[1]{\textcolor{red}{[Yuning: #1]}}

\newcommand{\ie}{\textit{i.e.}\xspace}

\newcommand{\eg}{\textit{e.g.}\xspace}

\newcommand{\etal}{\textit{et al.}\xspace}

\newcommand{\fp}{\vspace*{0.05in}\noindent}
\newcommand{\newparagraph}[1]{\fp {\bf #1}~}
\newcommand{\lar}{normalized objective\xspace}
\newcommand{\Lar}{Normalized objective\xspace}
\newcommand{\LAR}{Normalized Objective\xspace}
\newcommand{\sys}{\texttt{AnyPro}\xspace}
\newcommand{\pt}{ingress\xspace}
\newcommand{\scan}{\textit{max-min polling}\xspace}

\newlength{\onecolgrid}
\newlength{\twocolgrid}
\newlength{\threecolgrid}
\newlength{\fourcolgrid}
\colorlet{green2}{blue!15!green!85!}

\begin{document}


\title{AnyPro: Preference-Preserving Anycast Optimization \\ based on Strategic AS-Path Prepending}


\author{
{\rm 
Minyuan Zhou\textsuperscript{1,4}\textbf{*}$^{\dagger}$, 
Yuning Chen\textsuperscript{2,4}\textbf{*}$^{\dagger}$, 
Jiaqi Zheng\textsuperscript{1}, 
Yifei Xu\textsuperscript{3,4}$^{\dagger}$, 
Pan Hu\textsuperscript{4}, 
Yongping Tang\textsuperscript{4}, 
Wendong Yin\textsuperscript{4}, 
}\\
{\rm 
Jie Lin\textsuperscript{4}, 
Qingyan Yu\textsuperscript{4}, 
Yuanchao Su\textsuperscript{4}, 
Guihai Chen\textsuperscript{1}, 
Wanchun Dou\textsuperscript{1}, 
Songwu Lu\textsuperscript{3}, 
Wan Du\textsuperscript{2}
}\\
\textsuperscript{1}State Key Laboratory for Novel Software Technology, Nanjing University\\
\textsuperscript{2}University of California, Merced\\
\textsuperscript{3}University of California, Los Angeles\\
\textsuperscript{4}Alibaba Cloud\\
} 

\maketitle
\thispagestyle{empty}

\begingroup
\renewcommand\thefootnote{}
\footnotetext{\textbf{*}The first two authors contributed equally to the work.}
\footnotetext{$^{\dagger}$Work done during the internships of Yuning Chen, Minyuan Zhou, and Yifei Xu at Alibaba Cloud.}
\endgroup

\begin{abstract}
Operating large-scale anycast networks is challenging because client-to-site mappings often misalign with operator's expectation due to opaque inter-domain routing.
We present \sys, the first system to unlock the full potential of AS-path prepending (ASPP), efficiently deriving globally optimal configurations to steer clients toward performance-optimal sites at scale.
\sys first employs an efficient polling mechanism to identify all clients sensitive to ASPP.
By analyzing the routing changes during the process, the system derives a set of ASPP constraints that guide client traffic toward the desired sites.
We then formulate the anycast optimization problem as a constraint-based program and compute optimal ASPP configurations. Extensive evaluation on a global testbed with 20 PoPs demonstrates the effectiveness of \sys: it reduces the 90th percentile latency by 37.7\% compared to baseline configurations without ASPP. Furthermore, we show that \sys can be integrated with PoP-level anycast optimization techniques to achieve additional performance gains. 

\end{abstract}

\section{Introduction}

IP anycast is a foundational and widely adopted technique in today's Internet infrastructure, which involves advertising the same IP prefix from multiple geographically distributed Points of Presence (PoPs). This design enhances key service qualities including resilience to failures, distributed load balancing, and reduced latency by routing users to topologically or geographically proximate instances~\cite{li2023rovista,koch2023painter,koch2021anycast,zhang2021anyopt,Zhou:sigcomm2023,fastroute,calder2015analyzing,Stephen2019Taming,ballani2005towards}. Client-to-PoP routing is not governed by explicit centralized rules, but rather by the policy-driven inter-domain routing behavior of BGP, which selects paths based on configured policies and AS-level relationships. Due to its inherent flexibility, scalability, and robustness~\cite{zhu2022best}, IP anycast has become a critical technology underlying major global services, including root DNS systems~\cite{dnstwoday-liu-pam07,akamaiDNS-schomp-sigcomm20,Sarat2006On}, large-scale DDoS mitigation platforms~\cite{anycastddos-moura-imc16}, and content delivery networks (CDNs)~\cite{de2020global}, where improving performance and reliability is essential.

In a typical anycast deployment, each site serves a group of client IPs -- forming what is known as its catchment. Ideally, these catchments align with geographic proximity, topological closeness, or minimum-delay partitions to ensure optimal quality of service, especially in terms of latency. However, anycast does not always achieve such performance. Due to the nature of BGP routing, clients may be directed to geographically distant sites~\cite{ballani2005towards,ballani2006a,Sarat2006On,dnstwoday-liu-pam07,Li2018Internet}, resulting in significant path inflation that can add over 100 ms of latency and severely degrade user experience.

Network operators often predefine optimal client-to-site mapping (catchment) based on operational experience or application needs~\cite{zhang2021anyopt}. 
Achieving optimal mapping has always been the north star for anycast operations, but existing approaches are still far from it due to limited visibility into cross-AS topology and external routing policies. 
For instance, embedding the PoP’s geographic location in a BGP announcement is an intuitive idea~\cite{Li2018Internet}, but it demands BGP attribute extensions across multiple routing domains, which is operationally impractical. 
AnyOpt~\cite{zhang2021anyopt}, a recent optimization framework, selects a subset of sites through pairwise preference discovery. While it improves global catchment efficiency, the cost is sacrificing the latency of a significant fraction of clients, a compromise operators are unwilling to make.

In this landscape, AS-path prepending (ASPP) stands out as a remarkably versatile mechanism for fine-grained routing control~\cite{marcos2020path}.
Unlike more intrusive methods, ASPP extends AS-path lengths at selected PoPs to subtly steer  BGP route selection for certain ASes. 
However, this powerful tool has never been systematically used for large-scale anycast tuning. 
The key challenge lies in the huge search space. Unlike binary site selection, ASPP introduces exponentially more combinations, making pairwise scan~\cite{zhang2021anyopt} computationally infeasible.
Consequently, ASPP remains an underutilized yet highly promising mechanism for anycast tuning.

In this paper, we propose \sys, which can achieve globally optimal ASPP configurations in a practically efficient way.
\sys operates through three core phases: \textit{max-min polling} identifies ASPP-sensitive clients and establishes preliminary constraints; \textit{optimization solving} maximizes constraint satisfaction; and \textit{contradiction resolution} uses binary search to refine conflicting constraints. The final output is an optimal set of ASPP configurations per PoP-transit access point that yield the intended client-to-ingress mappings.


Overall, \sys provides a comprehensive framework for anycast optimization with three key capabilities: 
(1) systematic detection of ASPP-sensitive clients, 
(2) efficient identification of configuration constraint to satisfy routing preferences,
and (3) derivation of the optimal ASPP configurations. 
We evaluate \sys on a global testbed of 20 production PoPs. 
To further evaluate \sys's versatility, we integrate it with AnyOpt -- a state-of-the-art anycast optimization technique that operates through strategic PoP enablement. This combined approach yields remarkable performance gains: the 90th percentile latency is reduced to 58.0 ms, representing a significant improvement over baseline measurements. Regional-scale validation in Southeast Asia demonstrates particularly impressive results, with a 17.9\% improvement in the matching accuracy. These multi-scale experiments confirm \sys's effectiveness across different network topologies and operational requirements, while its constraint-based optimization framework provides network operators with both flexibility and precise control over anycast routing behavior.

The contribution can be summarized as follows.
\sys pioneers production-grade anycast catchment optimization by introducing a robust and deterministic framework that automatically derives optimal AS-path prepending configurations aligned with operator performance objectives. It establishes comprehensive theoretical foundations and a practical methodology, supported by an efficient algorithm capable of identifying ASPP-sensitive clients, generating preference-aware constraints --- even under third-party or middle-ISP changes --- and computing near-optimal configurations through iterative refinement. Evaluated on production networks, the system demonstrates substantial gains in catchment efficiency and RTT across diverse operational scenarios, while supporting peering-aware configurations and enabling future integration of advanced BGP controls.


\section{Background and Motivation}\label{sec:motivation}


Global network services require providers to deploy distributed servers, ensuring clients connect to optimal sites for ideal Quality of Service (QoS)~\cite{xu2024cloudeval, xu2023cloudeval}. Anycast simplifies this by advertising the same IP prefixes across all sites, leveraging BGP to route clients to the nearest or best-performing Point of Presence (PoP). We define a PoP as the physical access point and its connected clients as the catchment of that PoP. Each PoP may have multiple transit providers, and we refer to an ingress as a unique (PoP, transit provider) pair, representing a distinct entry point into the network.

Prior work has optimized anycast performance through various approaches, such as deploying additional PoPs~\cite{de2017Anycast}, expanding peering relationships with other ASes~\cite{calder2015analyzing, koch2021anycast}, and encoding PoP locations in BGP communities~\cite{Li2018Internet}. However, these methods often lack deployability in practice. AnyOpt~\cite{zhang2021anyopt} introduced PoP-level optimization by selectively enabling or disabling sites, but its reliance on pairwise BGP experiments creates operational overhead that limits scalability. To address this, we propose a finer-grained control mechanism: instead of operating at the PoP level, we optimize announcements at the ingress level --- the granularity of a (PoP, transit provider) pair --- enabling more flexible and scalable anycast routing.

\begin{figure*}[htb]
\centering
\includegraphics[width=1\linewidth]{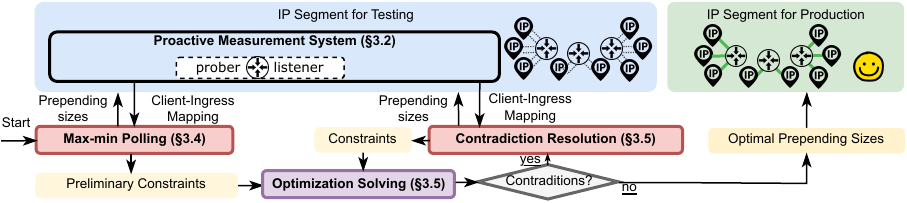}
\vspace{-8pt}
    \caption{\sys system overview.
      \label{fig:overview}
    }
\vspace{-11pt}
\end{figure*}

Typically, network operators employ two mechanisms to customize route announcements.
{\bf (1) Community}~\cite{Tony1997Community}. BGP communities act as standardized tags that convey routing instructions to neighboring ASes, enabling functions like route propagation control and path selection optimization~\cite{krenc2023coarse, IMC2021-ASComm}. However, their implementation lacks universal consistency --- different networks assign distinct community numbers for identical functions~\cite{Telia2024BGP,Cogent2024BGP}, creating significant operational complexity. This variability makes community-based routing configurations particularly challenging to manage at scale. Due to these inherent complexities, we exclude BGP community management from our primary focus and address it separately in \S\ref{sec:discuss community}.
{\bf (2) AS-path prepending}~\cite{Gao2005Interdomain}. AS-path prepending (ASPP) is a BGP traffic engineering mechanism in which an AS artificially extends its AS-path length by inserting duplicate instances of its AS number. Service providers strategically vary prepending lengths when announcing anycast prefixes to transit providers (\pt), leveraging BGP's path selection algorithm, which prioritizes routes with shorter AS-paths. This enables fine-grained control over inbound traffic flows~\cite{li2023rovista}. Importantly, ASPP policies exhibit remarkable stability across the Internet --- only 0.3\% of paths show prepending changes, and most modifications are temporary, reverting within days~\cite{zhu2022best}. This stability ensures that our optimization operates in a predictable environment, free from interference by sporadic prepending adjustments in other networks (as we detailed in \S\ref{sec:anypro}).

While ASPP offers significant potential for catchment optimization, its effective implementation presents three key challenges: (1) managing computational complexity, (2) inferring sufficient routing topology context to ensure accurate client-to-ingress mappings, and (3) constraining the total number of required configuration experiments. To address these challenges, we require a systematic approach that can both efficiently analyze ASPP's impact on individual client routing and determine globally optimal prepending configurations across all network ingresses.

\section{AnyPro} \label{sec:anypro}

This section presents an overview of \sys, its practical challenges, methods for discovering ASPP-sensitive clients, optimization for preference-preserving configurations, and the rationale behind its operational algorithms.

\subsection{Overview}
\label{sec:sysover}

\textbf{Real-world anycast testbed.}
Our anycast testbed operates on a production-grade network comprising 20 globally distributed PoPs, each connected to 1\textasciitilde3 transit providers (totaling 38 ingresses, as listed in Appendix~\ref{appendix:infra}). These PoPs peer with other ASes through IXPs, ensuring realistic routing conditions. The infrastructure supports two IP segments: one for live traffic and one dedicated to experiments. 
While each segment may employ independent ASPP configurations and exhibit distinct client-to-ingress mappings, identical settings always yield reproducible mappings since they share the same backbone network. 
Using the test segment, we deploy a proactive measurement system that efficiently characterizes global catchments for arbitrary ASPP configurations without exhaustive testing, enabling scalable optimization.

\noindent\textbf{Max-min polling identifies ASPP-sensitive clients while generating preference-preserving routing constraints.}
To derive sufficient conditions for ingress-level preference compliance, we first design an exploration function, \scan, which systematically probes the configuration space by: (1) initializing all PoP-transit prepending lengths to their maximum values, then (2) iteratively reducing to the minimum length. This scanning process identifies all feasible ingress preferences per client IP (with formal proof in \S\ref{sec:model}) while simultaneously generating preliminary preference-preserving constraints.



\noindent\textbf{Constraint contradiction resolution.} 
While \scan generates preliminary preference-preserving constraints for individual clients, combining constraints across all clients reveals frequent contradictions. This occurs because the initial exploration only tests prepending lengths at the boundaries (0 and $MAX$), resulting in coarse-grained length differences (limited to 0 and $MAX$)
that fail to capture intermediate optimal values. To address this, we develop an efficient contradiction resolution algorithm that employs binary search to iteratively refine the prepending length difference thresholds. This approach produces a tightly constrained optimization problem while maintaining computational efficiency --- reducing the search space by orders of magnitude compared to brute-force methods.

\begin{figure}[tb]
    \centering
    \includegraphics[width=\columnwidth]{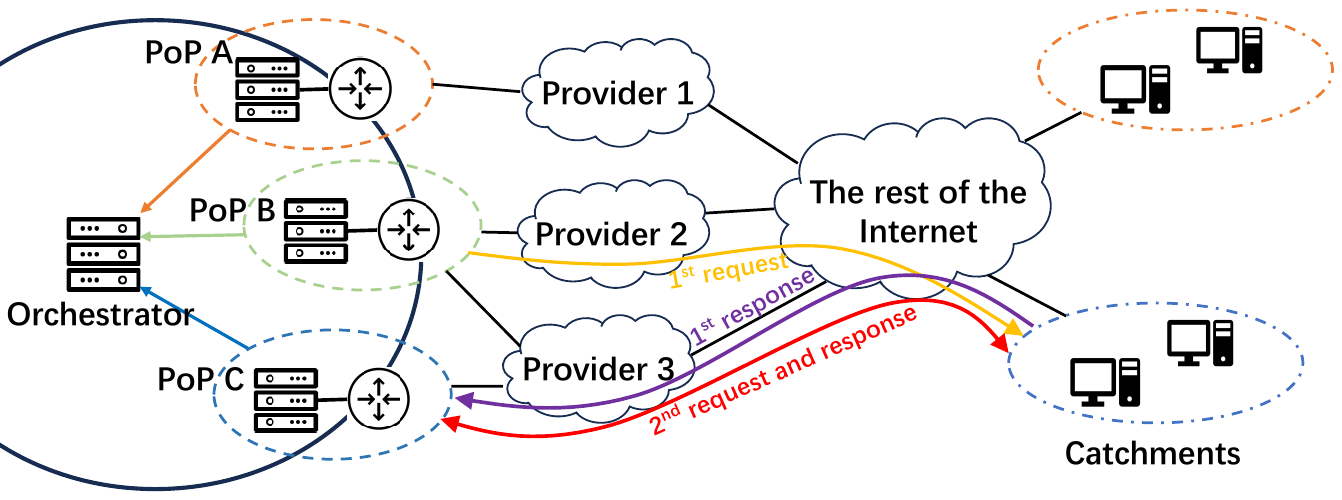}
    \vspace{-2pt}
    \caption{Illustration of the real-world testbed we used for anycast measurements.}
    \vspace{-2pt}
    \label{fig:measure_sys}
\end{figure}

\subsection{Proactive measurement system} \label{sec:mearsys}
Although traditional passive measurement with client request logs can reflect real-time ingress behavior statistics, it has two major limitations.
First, client request logs in some client IPs can be sparse, especially considering the peak and off‐peak hours within different time zones. Therefore, it requires significant time to output a complete client-ingress mapping for all clients. 
Second, client request logs are collected directly from production networks, which prevents trying risky configurations and collecting the feedback effect on real networks.

To ensure the generalizability of our experimental findings across diverse network environments, we construct a representative and stable IP set derived from the authoritative ISI IPv4 hitlist~\cite{Fan10a} --- the most comprehensive public source of active IPv4 addresses, offering broad coverage across geographic and topological regions. This IP selection strategy, combined with a client-agnostic proactive measurement approach that operates without relying on client requests, enhances the universal applicability of our experimental results~\cite{Verfploeter2017de,sommese2020manycast2}. 
To further improve reliability, we filter the initial IPs through week-long active probing, retaining only addresses with under 10\% packet loss, which excludes unstable nodes and strengthens measurement validity. 
The final set comprises \textasciitilde2.4M IP addresses, a substantial portion of which belong to network middleboxes. 


To measure client-ingress mapping and client-to-PoP RTTs, we developed a custom measurement tool (Figure~\ref{fig:measure_sys}). 
Each ingress hosts a prober-listener pair that exchanges ICMP packets in an anycast format. The measurement process begins with proactive probing: anycast PoPs send ICMP requests (yellow lines) with anycast source addresses to target clients. Responses (purple lines) reveal catchment mappings by routing to the nearest enabled PoP. Upon receiving a response, the ingress immediately issues a follow-up ICMP request containing a unique identifier and timestamp. When the response (red line) returns, we compute RTT from the timestamp delta and log the result alongside ingress metadata (transit providers, peers, IXPs) to inform announcement optimization. This dual-phase approach simultaneously captures catchment boundaries and latency while minimizing measurement overhead.

\subsection{Practical challenges}\label{sec:pratical_challenges}
We have introduced the overview of \sys. However, to make it practical, we must address the following challenges.

\newparagraph{Determining client-to-ingress mappings across all possible prepending configurations presents significant computational and methodological challenges.} 
In anycast networks, client-to-ingress mappings are determined by BGP routing decisions, which respond dynamically to AS-path prepending adjustments. For a network with $n$ ingresses (each supporting $m$ prepending lengths), exhaustive testing of all possible configurations would require $\mathcal{O}(m^{n})$ BGP experiments. Given that each experiment incurs nontrivial convergence delays (often minutes), this brute-force approach becomes computationally intractable at scale, necessitating a more efficient optimization methodology.

\noindent{\bf Constraint contradiction resolution incurs significant computational overhead.} 
When consolidating preliminary preference-preserving constraints across all clients, inherent contradictions frequently arise due to competing routing objectives. Resolving each contradiction through naive enumeration would necessitate $\mathcal{O}(m)$ BGP experiments per conflict (systematically testing each ingress's prepending length from MIN to MAX). This approach becomes operationally infeasible for three key reasons: (1) BGP convergence delays introduce minutes of network churn per experiment, (2) the quadratic scaling of total experiments ($\mathcal{O}(m \cdot k)$ for $k$ contradictions) becomes prohibitive in large networks, and (3) transient Internet routing fluctuations may invalidate results before completion. Our measurements show that even mid-sized deployments (e.g., 20 PoPs with 38 ingresses) could require weeks of continuous testing to resolve all conflicts through brute-force methods --- an untenable proposition for production networks requiring real-time optimization.

\subsection{Max-min Polling}\label{sec:model}
A client IP is considered ASPP‑sensitive only when it can reach the prefix through at least two distinct ingresses.
Among all ingresses, the client IPs prefer one over others due to service quality discrepancies. The matrix that records whether each client prefers an ingress or not is the desired client-ingress mapping, denoted as $\boldsymbol{M^{*}}$.
To find out all ASPP-sensitive clients and their potential routes, 
we develop an efficient \scan method as depicted in Algorithm~\ref{algo:1}, which produces two outcomes:

\noindent\textbf{Outcome 1: Discovering ASPP-sensitive clients.}
\textit{Max-min polling} is executed by initially setting the prepending length to \textit{MAX} for all ingresses, and iteratively adjusting each to zero while others remain \textit{MAX}, where \textit{MAX} is an upper bound of the prepending length, while \textit{MIN} is set as zero.
Initially, all prepending lengths are set to \textit{MAX} (line 1).
Under this initial configuration, the client-ingress mapping, denoted as $\boldsymbol{M}$, serves as a baseline mapping (line 2).
We iteratively tune the ASPP configuration of each ingress and observe the corresponding client-ingress mapping changes (line 3–8).
For each ingress, we first eliminate the prepending (line 4).
At this moment, since the prepending lengths of others are kept at $\textit{MAX}$, this ingress becomes the most preferred, in terms of AS-path length, and then we measure the client-ingress mapping, denoted as $\boldsymbol{M’}$ (line 5).
Comparing $\boldsymbol{M’}$ and $\boldsymbol{M}$, we identify the clients whose ingresses change.
These clients are identified as ASPP-sensitive, as their routing behavior responds directly to alterations in the prepending configuration of the specific ingress.

\vspace{0.1in}
\begin{algorithm}
\caption{\scan execution and processing}
\begin{algorithmic}[1]
\algtext*{EndFor}
\Statex \hspace{-\algorithmicindent}\textbf{Input:} All ingresses $\boldsymbol{P}= \{ p_{i,j} \}$ for the $i_{th}$ PoP and $j_{th}$ transit; the desired client-ingress mapping $\boldsymbol{M^{*}}$
\Statex \hspace{-\algorithmicindent}\textbf{Output:} Pairwise preference constraints $\boldsymbol{\Gamma}$ for $\boldsymbol{M^{*}}$
\State Set the prepending length of each $p_{i,j} \in \boldsymbol{P}$ to $MAX$.
\State Gather client-ingress mapping $\boldsymbol{M}$ with \S3.2. 
\For{each $p_{i,j} \in \boldsymbol{P}$ }
    \State Set the prepending length of $p_{i,j}$ to 0.
    \State Gather client-ingress mapping $\boldsymbol{M'}$ with \S3.2.
    \State Compare $\boldsymbol{M'}$ and $\boldsymbol{M}$, generate pairwise preference constraints $\gamma$ to achieve the desired mapping $\boldsymbol{M^{*}}$.
    \State $\boldsymbol{\Gamma} = \boldsymbol{\Gamma} \cup \gamma$.
    \State Set the prepending length of $p_{i,j}$ to $MAX$.
\EndFor
\end{algorithmic} \label{algo:1}
\end{algorithm} 
\vspace{0.1in}

\begin{figure}[tb]
    \includegraphics[width=.98\linewidth]{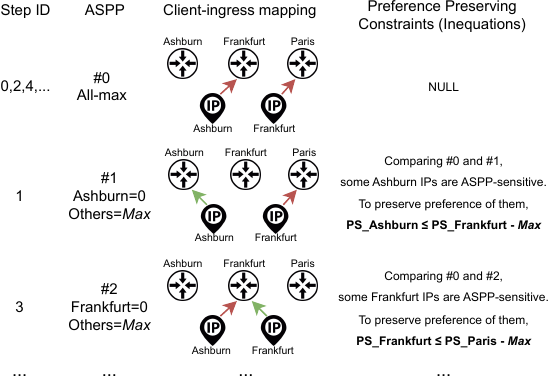}
    \vspace{4pt}
    \caption{An illustrative example of \scan.
      \label{fig:poll_example}
    }
\end{figure}

To prove the completeness of \scan in identifying ASPP-sensitive clients, we now introduce a lemma about \scan in a pair of ingresses.
\begin{lemma}[Completeness of \scan for a pair of ingresses] \label{lem:1}
For any two ingresses $p_{i,j}$ and $p_{m,n}$, \scan can effectively explore all potential routes for all clients by shrinking the length of the decision space
from an interval $(s_{i,j},s_{m,n}\in [MIN, MAX])$ to two points $(s_{i,j},s_{m,n} \in \{MIN, MAX\})$,
where $s_{i,j}$ and $s_{m,n}$ is the prepending length of $p_{i,j}$ and $p_{m,n}$, respectively.
\end{lemma}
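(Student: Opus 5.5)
We reduce the BGP decision of a fixed client $c$ between the two ingresses to a single threshold rule in the prepending difference, and then check that the corner configurations realize every outcome that rule allows.

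\textbf{Step 1: the routing model.} Fix a client $c$ and the two ingresses $p_{i,j}$ and $p_{m,n}$. We use the premise, stated in \S\ref{sec:motivation}, that third-party ASPP policies and the remaining BGP attributes are stable. Under that premise, every attribute that $c$'s upstream routers compare is fixed, except the AS-path lengths contributed by our own prepending. These attributes are local preference, relationship-based policy, and the tie-breakers (IGP cost, router ID). We write the route through $p_{i,j}$ as having length $\ell_{i,j}+s_{i,j}$ and the route through $p_{m,n}$ as having length $\ell_{m,n}+s_{m,n}$, where $\ell$ denotes the unprepended base length.

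\textbf{Step 2: the threshold form of the decision.} BGP first compares attributes ranked above path length. If they already decide between the two routes, the choice of $c$ does not depend on $(s_{i,j},s_{m,n})$ at all. In that case $c$ is insensitive for this pair, and any single configuration, in particular a corner, exhibits its only route. Otherwise the decision depends only on the sign of
\[
\Delta = (\ell_{i,j}+s_{i,j})-(\ell_{m,n}+s_{m,n}),
\]
with a fixed tie-break when $\Delta=0$. Hence $c$'s choice is a monotone step function of $d=s_{i,j}-s_{m,n}$:
\begin{itemize}
\item $c$ picks $p_{i,j}$ when $d<\theta$;
\item $c$ picks $p_{m,n}$ when $d>\theta$;
\item the tie-break decides at $d=\theta$.
\end{itemize}
Here the threshold is $\theta=\ell_{m,n}-\ell_{i,j}$. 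So over the square $[MIN,MAX]^2$, the set of routes realized by $c$ is either one route or both routes.

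\textbf{Step 3: corners suffice.} The difference $d$ ranges over $[-MAX, MAX]$ (recall $MIN=0$), and the two extreme values are attained exactly at the corners $(MIN,MAX)$ and $(MAX,MIN)$. These are precisely the configurations that max-min polling visits: the pair $(MIN,MAX)$ for $p_{i,j}$ arises when $p_{i,j}$ is reset to 0 in line~4 while $p_{m,n}$ stays at $MAX$, and symmetrically for $(MAX,MIN)$. By monotonicity, if some interior configuration makes $c$ pick $p_{i,j}$, then the corner with $d=-MAX$ does too, since $d$ only gets smaller. Symmetrically, if some interior configuration makes $c$ pick $p_{m,n}$, the corner with $d=+MAX$ does too. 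Therefore every route $c$ can take in the interval space appears at one of the two polled points, which gives completeness. Comparing the two mappings (line~6) also identifies $c$ as ASPP-sensitive exactly when the choices differ.

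\textbf{Main obstacle.} The delicate step is the assumption behind Step~1: that only our own prepending changes, and that the route through each ingress exists and keeps its base path regardless of $s$. Path-length limits, filtering of long prepends, or an intermediate AS switching to a different path as the length grows could break monotonicity. We handle this by first arguing that $MAX$ is chosen below any filtering threshold. We then treat the base lengths $\ell$ as fixed, citing the measured stability of ASPP policies in~\cite{zhu2022best}. Finally, we note that any remaining effect of an intermediate AS switching paths only shifts $\theta$ and does not destroy the step structure, as long as the preference is a function of the length difference. We would state this modelling assumption explicitly at the start of the proof.

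Finally, the thresholds $\theta$ that Step~2 leaves undetermined inside the interval are exactly what the later binary-search contradiction-resolution stage refines. This confirms that the lemma claims only completeness of the routes discovered, not exact thresholds.
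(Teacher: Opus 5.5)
Your proposal is correct and follows essentially the same argument as the paper. The paper's proof notes that max-min polling realizes the prepending differences $MIN-MAX$, $0$, and $MAX-MIN$, then invokes the monotonicity of route preference in that difference, which is exactly your Step 3; your Steps 1--2 and the explicit modelling assumption only make precise the monotonicity premise the paper takes for granted.
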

\begin{proof}
    During \scan, the prepending length difference between any two ingresses can be defined as $s_{i,j}-s_{m,n} = \{\textit{MIN}-\textit{MAX}$, 0, $\textit{MAX}-\textit{MIN}\}$.
    Therefore, considering that the preference of the route selection is monotonic to the prepending length difference, all possible routes are explored by \scan.
\end{proof}
Based on the discussion of two ingresses shown in Lemma~\ref{lem:1}, we can further derive a theorem to extend Lemma~\ref{lem:1} to a general case with multiple ingresses.

\begin{theorem}
    \scan can explore all ASPP-sensitive clients as well as their potential routes, where the prepending length of each \pt belongs to the interval [\textit{MIN}, \textit{MAX}].
\end{theorem}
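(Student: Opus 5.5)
The plan is to reduce the multi-ingress statement to Lemma~\ref{lem:1} by working client by client. The argument rests on the same monotonicity assumption the paper uses in the proof of Lemma~\ref{lem:1}. For a fixed client, route selection among the ingresses it can reach depends on the prepending configuration only through pairwise length differences $s_{i,j}-s_{m,n}$. Moreover, lengthening the prepending at one ingress never makes that ingress more preferred relative to any other. Under this assumption I will first fix what ``potential route'' means. An ingress $p$ is a potential route for a client $c$ if some configuration $\mathbf{s}\in[\textit{MIN},\textit{MAX}]^{|\boldsymbol{P}|}$ makes $c$ select $p$. A client is ASPP-sensitive if its set of potential routes has at least two elements.

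\textbf{Key steps.} The first step is to show that every potential route is realized by a single configuration visited by \scan. Take a configuration $\mathbf{s}$ under which $c$ selects $p_{i,j}$. Then, for every other reachable ingress $p_{m,n}$, $c$ prefers $p_{i,j}$ over $p_{m,n}$ at difference $s_{i,j}-s_{m,n}\ge \textit{MIN}-\textit{MAX}$. By monotonicity, decreasing this difference only strengthens the preference for $p_{i,j}$. So $c$ also prefers $p_{i,j}$ over every $p_{m,n}$ at difference $\textit{MIN}-\textit{MAX}$, which is exactly the configuration of iteration $p_{i,j}$ in Algorithm~\ref{algo:1}: $s_{i,j}=\textit{MIN}$ and all others at \textit{MAX}. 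Because the winner beats every competitor pairwise, these pairwise conclusions combine into the statement that $c$ selects $p_{i,j}$ in that iteration. Hence $p_{i,j}$ appears in $\boldsymbol{M'}$ for the iteration of $p_{i,j}$, and the set of potential routes of $c$ is exactly the set of ingresses selected across the $|\boldsymbol{P}|$ iterations.

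The second step concerns sensitivity. If $c$ has two potential routes $p\neq p'$, then by the first step $c$ selects $p$ in one iteration and $p'$ in another. At least one of these choices differs from the baseline choice in $\boldsymbol{M}$ (all at \textit{MAX}), so the comparison on line~6 flags $c$. Conversely, a flagged client has visibly changed ingress, so it is ASPP-sensitive. Together the two steps give the theorem. The constraints $\gamma$ collected along the way record the pairwise differences; they are refined later by contradiction resolution, which is not needed here.

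\textbf{Main obstacle.} The delicate point is passing from pairwise preferences, which is what Lemma~\ref{lem:1} provides, to the joint best-route decision among many ingresses. Two things could break this. BGP selection may not be reducible to pairwise comparisons, because of tie-breakers such as router ID or MED that depend on the set of available routes. Also, an intermediate AS's choice may change which routes are even visible to $c$. I would handle this by stating explicitly, as the model assumption inherited from Lemma~\ref{lem:1}, that each client's decision is a total preference order that is monotone in every ingress's prepending length. Under that assumption the ``beats all competitors'' argument in the first step goes through directly. I would also note that ties at a zero difference are resolved deterministically and reproducibly, as observed on the testbed, so the baseline $\boldsymbol{M}$ is well defined.
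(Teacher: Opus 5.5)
Your argument is correct under the model assumption you state, but it takes a more careful route than the paper's own proof. The paper argues only that the ingresses ``can be divided into multiple pairs,'' that each pair is explored independently by max-min polling via Lemma~\ref{lem:1}, and that this covers everything. It never explains why a pair can be treated in isolation when every polling iteration pins all other ingresses at \textit{MAX}. Nor does it explain how pairwise coverage becomes coverage of the client's joint best-route choice. You instead fix a client and a configuration under which it selects $p_{i,j}$. You then observe that monotonicity pushes \emph{all} of $p_{i,j}$'s pairwise differences to $\textit{MIN}-\textit{MAX}$ simultaneously, and that this is exactly the polling iteration for $p_{i,j}$. The total-order assumption then turns those pairwise wins into an actual selection. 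Decomposing around the winner in this way supplies the missing aggregation step. It also gives a sharper conclusion: each potential route is realized by its own ingress's iteration, so the $|\boldsymbol{P}|+1$ polled configurations suffice. Finally, it makes explicit the hypotheses the paper uses silently: pairwise preference depends only on $s_{i,j}-s_{m,n}$, is monotone in that difference, and is consistent with a total order. Stating them also shows their limits. They are exactly what fails for the 4.9\% of client groups the paper later reports shifting because of a third-party ingress, and for the MED-style non-transitivity you flag. So both your proof and the paper's establish the theorem only within this idealized pairwise model, which you have correctly identified.
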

\begin{proof}
    The set of all ingresses can be divided into multiple pairs. Each pair can be explored independently by \scan as discussed in Lemma \ref{lem:1}. By applying \scan to each pair, we ensure that the ASPP-sensitive clients and their potential routes are thoroughly explored. 
\end{proof}

\noindent\textbf{Outcome 2: Revealing preliminary preference-preserving constraints.} 
During \scan, apart from discovering all sensitive clients, we also obtain all \textit{candidate ingresses} for each client.
Among them, we generate a set of inequation conditions to drive each client into the desired ingress.
These inequation conditions are defined as \textit{preliminary preference-preserving constraints}.
For each client, pairwise comparisons between candidate ingresses can establish a critical threshold \(\Delta s^*\) that governs routing preference: if the operator intends to prioritize ingress A over ingress B, constraints are formulated to ensure A's ASPP configuration maintains a prepending advantage of at least \(\Delta s^*\) than B.
However, since \scan tests only two ASPP configurations per ingress (\ie, 0 and MAX prepending), the precise \(\Delta s^*\) cannot be empirically determined, resulting in preliminary constraints that approximate routing behavior. 

Figure~\ref{fig:poll_example} illustrates an example of deriving preliminary preference-preserving constraints during \scan process (lines 6–7, Algorithm~\ref{algo:1}).
Initially, each ingress is configured to prepend $MAX$ length.
When a client $c_k$ in Ashburn switches from ingress Frankfurt ($p_{Frankfurt}$) to Ashburn ($p_{Ashburn}$) (for simplicity, we do not specify the transit provider here) after removing prepending at Ashburn ($s_{Ashburn}$) while others remain $MAX$, it reveals a partial preference order: $p_{Ashburn} \succ_{c_k} p_{Frankfurt}$ under $s_{Ashburn}=0$, $s_{Frankfurt}=MAX$, but $p_{Frankfurt} \succ_{c_k} p_{Ashburn}$ under uniform MAX prepending.
To enforce $c_k$'s desired access to $p_{Ashburn}$, the preference-preserving constraint $s_{Ashburn} \leq s_{Frankfurt} - MAX$ is derived. 
This ensures $p_{Ashburn}$'s dominance by maintaining a prepending gap exceeding $\Delta s^*$ between $s_{Ashburn}$ and $s_{Frankfurt}$, while intermediate configurations ($s_{Ashburn} = \frac{MAX}{2}, s_{Frankfurt}=MAX$) are excluded due to nondeterministic routing behavior (We also show why $min-max$ polling fails to satisfy our requirements in Appendix~\ref{appendix:max-min}). 

We now establish a formal proof demonstrating that for any pair of candidate ingresses, there exists a unique critical threshold \(\Delta s^*\) governing routing preference:

\begin{theorem}[Existence and uniqueness of preference-preserving constraints for a pair of ingresses]
\label{theorem:3}
Given a client \(c_k\),
if it can enter either ingress $p_{i,j}$ or $p_{m,n}$ under different prepending lengths, 
there exists a unique integer \(\Delta s^*\), 
s.t.,\\
\( (p_{m,n} \succ_{c_k, (s_{m,n} - s_{i,j} = \Delta s^*)} p_{i,j}) \land (p_{i,j} \succ_{c_k, (s_{m,n} - s_{i,j} = \Delta s^* + 1)} p_{m,n}) \),
where $\succ_{c_k,\boldsymbol{S}}$ represents the partial routing preference of client \(c_k\) under prepending configuration $\boldsymbol{S}$, and $s_{i,j}$ represents the prepending length of $p_{i,j}$.
\end{theorem}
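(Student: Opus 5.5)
We will argue by a discrete monotonicity (intermediate-value) argument on the integer difference $d = s_{m,n} - s_{i,j}$, which ranges over the finite set $\{MIN-MAX,\dots,MAX-MIN\}$. We use the same modelling assumption as in the proof of Lemma~\ref{lem:1}. With all other ingresses' prepending lengths held fixed, the choice of $c_k$ between $p_{i,j}$ and $p_{m,n}$ depends only on $d$. Moreover, this choice is monotone in $d$: increasing $s_{m,n}$ relative to $s_{i,j}$ lengthens the AS-path through $p_{m,n}$ relative to that through $p_{i,j}$, and BGP prefers shorter AS-paths once the higher-priority attributes (local preference, etc.) are tied. So if $p_{i,j}$ is preferred at some $d$, it is still preferred at every $d' > d$. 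Symmetrically, if $p_{m,n}$ is preferred at some $d$, it is still preferred at every $d' < d$.

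\textbf{Step 1.} Define the indicator $f(d) = 1$ if $p_{i,j} \succ_{c_k,d} p_{m,n}$, and $f(d)=0$ if $p_{m,n} \succ_{c_k,d} p_{i,j}$. BGP's tie-breaking rules, such as router ID and oldest route, make the selection deterministic for a fixed configuration. This is consistent with the testbed observation that identical settings yield reproducible mappings, so $f$ is well defined.

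\textbf{Step 2.} By the monotonicity above, $f$ is non-decreasing in $d$.

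\textbf{Step 3.} The hypothesis that $c_k$ can enter either ingress under different prepending lengths gives some $d_0$ with $f(d_0)=0$ and some $d_1$ with $f(d_1)=1$. Monotonicity then forces $d_0 < d_1$.

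\textbf{Step 4 (existence).} Set $\Delta s^* = \max\{d : f(d) = 0\}$. This maximum exists because the domain is finite and nonempty. Since $f(d_1)=1$, we have $\Delta s^* < MAX-MIN$, so $\Delta s^*+1$ lies in the domain. Maximality gives $f(\Delta s^*+1)=1$, which is exactly the required conjunction.

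\textbf{Step 5 (uniqueness).} Suppose $\delta \neq \Delta s^*$ also satisfies $f(\delta)=0$ and $f(\delta+1)=1$. If $\delta < \Delta s^*$, then $\delta+1 \le \Delta s^*$, and monotonicity gives $f(\delta+1) \le f(\Delta s^*) = 0$, a contradiction. If $\delta > \Delta s^*$, the same argument applied with the roles of $\delta$ and $\Delta s^*$ exchanged gives a contradiction. So a non-decreasing $0/1$ function that takes both values has exactly one switching point.

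The main obstacle is justifying the modelling premises. First, the pairwise preference must depend only on the difference $d$ rather than on $(s_{i,j}, s_{m,n})$ separately. Second, it must be monotone in $d$. Third, it must not be disturbed by third ingresses or by ties. In real BGP, prepending beyond some length can cause routes to be filtered, and routers using MED or hot-potato routing can break strict monotonicity. We would state these as explicit assumptions: the other ingresses stay at $MAX$ as in \scan, routing is deterministic, and path-length comparison is the deciding attribute in the relevant range. These are the same premises already invoked in the proof of Lemma~\ref{lem:1}. Once they are granted, Steps 1--5 are routine.
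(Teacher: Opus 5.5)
Your proof is correct and follows the same route as the paper's: monotonicity of the client's choice in the difference $s_{m,n}-s_{i,j}$ gives a flip point (existence), and the impossibility of switching back gives uniqueness. Your version is more careful, because it states explicitly the modelling premises that the paper only invokes implicitly through its appeal to BGP's shortest-path principle: dependence on the difference alone, other ingresses held fixed, and deterministic tie-breaking.
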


\begin{proof}
\textbf{(Existence).}
Since the client $c_k$ switches route (from $p_{m,n}$ to $p_{i,j}$) when the prepending length difference $s_{m,n} - s_{i,j}$ increases monotonically,
there must be an integer $d$ where the preference flips.

\medskip
\noindent
\textbf{(Uniqueness).}
Assume two distinct flip points $\Delta s_1 < \Delta s_2$ exist. 
This would imply that $c_k$ switches back to $p_{m,n}$ for some $\Delta s \in (\Delta s_1, \Delta s_2)$, violating BGP’s shortest-path selection principle~\cite{krenc2023coarse}, which shows that once a route becomes less preferred due to increased prepending, it cannot regain preference without external policy changes.
Thus, no such interval can exist, and $\Delta s^*$ must be unique.
Thus, there cannot be more than one such flip point, and $\Delta s^*$ must be unique.
\end{proof}

\subsection{Solving with constraints and contradiction resolution} \label{sec:tie-breaking}
In this section, we describe: 
(1) Given the constraints, how we solve the optimization to get an optimal prepending configuration; 
(2) How we efficiently resolve contradictions among constraints to ensure optimality. 
(3) An exceptional observation that ingress shifts can be caused by third-party prepending length. 


\noindent\textbf{Optimization solving implementation and feasibility verification.} \label{sec:formulation}
Given the desired client-ingress mapping $\boldsymbol{M^{*}}$,
we wish to determine an optimal prepending configurations $\boldsymbol{S}$ such that the product of $\boldsymbol{M^{*}}$ and $\boldsymbol{M}$ is maximized (that is to say, $\boldsymbol{M}$ matches $\boldsymbol{M^{*}}$ as much as possible), where $\boldsymbol{M}$ is the resulting client-ingress mapping under the prepending configurations $\boldsymbol{S}$. 
Accordingly, we formulate this problem as the program (1).

\begin{align*}
\text{maximize} & \sum_{c_k \in \boldsymbol{C}} \sum_{p_{i,j}\in \boldsymbol{P}} \boldsymbol{M^{*}}_{c_k,p_{i,j}} \cdot \boldsymbol{M}_{c_k,p_{i,j}}  \tag{1}\\
\text{s.t.} \quad & \sum_{p_{i,j}\in \boldsymbol{P}} \boldsymbol{M}_{c_k,p_{i,j}} = 1, \quad\forall c_k \in \boldsymbol{C} \tag{1a}\\
& \boldsymbol{M}_{c_k,p_{i,j}} \in \{0, 1\}, \quad\forall c_k \in \boldsymbol{C}, \forall p_{i,j}\in \boldsymbol{P}\tag{1b}\\
& \boldsymbol{M}_{c_k,p_{i,j}} \geq \boldsymbol{M}_{c_k,p_{m,n}}, \quad\forall p_{i,j} \succ_{c_k,\boldsymbol{S}} p_{m,n} \tag{1c}\\
& s_{i,j} \in \{0,1, \ldots, MAX\}, \quad\forall s_{i,j} \in \boldsymbol{S} \tag{1d}
\end{align*}

The constraint (1a) characterizes that one client enters exactly one ingress.
$M_{c_k,p_{i,j}}$ is a zero-one variable indicating whether the client $c_k$ enters the ingress $p_{i,j}$ or not.
The constraint (1c) preserves the partial order preference between $p_{i,j}$ and $p_{m,n}$ under the configuration $\boldsymbol{S}$.
The optimization variable $s_{i,j}$ is an integer, representing the prepending length configured for the ingress $p_{i,j}$.
This problem is NP-hard, which can be reduced from classic max-SAT problem~\cite{williamson2011design} and the proof can be found in Appendix~\ref{appendix:nphard}.

Before running program (1) using commercial solver OR-Tools~\cite{ortools}, we verify the solving feasibility with statistics.
First, although the total number of clients is large, most clients exhibit identical ingress selection patterns across configurations, enabling aggregation into \textit{client group} that the same set of routing constraints.
This grouping is derived empirically from observed routing behavior rather than predefined structures such as BGP atoms.
Specifically, despite there are \textasciitilde 2.4M clients, they form only \textasciitilde 14,700 unique client groups.
Second, sparse candidate ingress distribution is observed: 58\% of client groups have only 1–2 candidate ingresses (equivalently 0–1 constraints), while merely 15\% have over 10 candidate ingresses. 
Third, each client group’s constraints are structured in conjunctive normal form (CNF), requiring simultaneous satisfaction of all constraints. For example, constraints for a client group $c_a$ selecting desired ingress might be: 
$(s_{i,j} \leq s_{m,n} - MAX) \wedge (s_{i,j} \leq s_{x,y} - MAX)$.
while group $c_b$ choosing the desired ingress could require $(s_{i,j} \leq s_{m,n} - MAX) \wedge (s_{i,j} \leq s_{v,w} - MAX)$. 
Since the MAX-SAT problem is given by a CNF formula, our original problem is reducible to maximizing satisfaction of distinct atomic constraints of all client groups: $(s_{i,j} \leq s_{m,n} - MAX) \wedge (s_{i,j} \leq s_{x,y} - MAX) \wedge (s_{i,j} \leq s_{v,w} - MAX)$.
Therefore, the complexity of the problem is further reduced.

As a result, the total number of constraints is less than 1,000, and the solver can solve it in less than one second. 
For further scaling concerns, if the number increases significantly, we can prioritize large client groups to reduce the number or use the early stopping strategy to limit the inference time.

\vspace{0.1in}
\noindent\textbf{Constraint contradiction resolution.}
Preliminary constraints derived from \scan are categorized into two types based on their operational origins and structural forms: 
\textbf{TYPE-I constraints}, expressed as $s_{i,j} \leq s_{m,n} - MAX$, originate from scenarios where a client’s desired ingress $p_{i,j}$ becomes accessible only when its prepending length $s_{i,j}$ reaches zero.
\textbf{TYPE-II constraints}, expressed as $s_{i,j} \leq s_{m,n}$, arise when a client can access $p_{i,j}$ when both $p_{i,j}$ and $p_{m,n}$ are prepended $MAX$ ASes but turns to $p_{m,n}$ once $s_{m,n}$ depletes to zero. 

Consider two clients $c_k$ and $c_l$ whose preliminary constraints are $s_{i,j} \leq s_{m,n} - MAX$ and $s_{m,n} \leq s_{i,j}$ respectively.
As these two preliminary constraints cannot be satisfied simultaneously (the only solution for the first constraint is $s_{m,n}=0, s_{i,j}=MAX$ which is inherently unsatisfiable for the second constraint), we denote it as a constraint contradiction. 

Constraint contradictions arise from the preliminary constraints’ maximal looseness. By introducing tighter bounds ($\Delta s$), we resolve the contradictions through interval overlap analysis.
Taking the same example, for $c_k$'s constraints, according to Theorem~\ref{theorem:3}, there exists a $\Delta s_1^*$ which satisfies: $\Delta s_1^* \leq MAX$ and $s_{i,j} \leq s_{m,n} - \Delta s_1^*$. 
Therefore, the feasible interval for $c_k$ is $s_{m,n} - s_{i,j} \in [\Delta s_1^*, MAX]$. Similarly, there exists a $\Delta s_2^*$ which satisfies: $\Delta s_2^* \geq 0$ and $s_{m,n} \leq s_{i,j} + \Delta s_2^*$, with its feasible interval defined as $s_{m,n} - s_{i,j} \in [0, \Delta s_2^*]$. 
Consequently, the resolvability of the contradiction depends on the intersection of these intervals: $s_{m,n} - s_{i,j} \in [\Delta s_1^*, \Delta s_2^*]$. This intersection is non-empty if and only if $\Delta s_2^* \leq \Delta s_1^*$. If this condition holds, the constraints are resolvable; otherwise, they can be asserted unresolvable.

TYPE-II constraints are inherently resolvable between themselves because their mutual contradiction (\eg, $s_{i,j} \leq s_{m,n}$ and $s_{m,n} \leq s_{i,j}$) collapses into an equality $s_{i,j} = s_{m,n}$, which is always satisfiable.
Conversely, conflicting TYPE-I constraints are irreconcilable when mutually imposed(\eg, $s_{i,j} \leq s_{m,n} - MAX$ and $s_{m,n} \leq s_{i,j} - MAX$), as their combined logic enforces $MAX=0$, violating the predefined $MAX>0$ parameter. 
Based on this insight, we design a binary scan algorithm specifically targeting hybrid contradictions between TYPE-I and TYPE-II constraints.

\vspace{0.1in}
\begin{algorithm}
\caption{Binary scan for Constraint Resolution}
\begin{algorithmic}[1]
\algtext*{EndFor}
\algtext*{EndWhile}
\algtext*{EndIf}
\Statex \hspace{-\algorithmicindent}\textbf{Input:} 
Contradictions $\gamma_1: s_{i,j} \leq s_{m,n} - k $ and $\gamma_2: s_{m,n} \leq s_{i,j} + b$
\Statex \hspace{-\algorithmicindent}\textbf{Output:} Refined constraints $\gamma'_1$, $\gamma'_2$.

\State Initialize $\Delta s_1^{min} \gets 0$, $\Delta s_1^{max} \gets k$, $\Delta s_2^{min} \gets b$, $\Delta s_2^{max} \gets MAX$
\While{$\Delta s_1^{max} > \Delta s_2^{min}$ and $\Delta s_1^{min} < \Delta s_2^{max}$}
\If{$s_{i,j} \leq s_{m,n} - \frac{\Delta s_1^{min} + \Delta s_1^{max}}{2}$ still holds}
\State $\Delta s_1^{max} \gets \frac{\Delta s_1^{min} + \Delta s_1^{max}}{2}$
\Else
\State $\Delta s_1^{min} \gets \frac{\Delta s_1^{min} + \Delta s_1^{max}}{2}$
\EndIf
\If{$s_{m,n} \leq s_{i,j} + \frac{\Delta s_2^{min} + \Delta s_2^{max}}{2}$ still holds}
\State $\Delta s_2^{min} \gets \frac{\Delta s_2^{min} + \Delta s_2^{max}}{2}$
\Else
\State $\Delta s_2^{max} \gets \frac{\Delta s_2^{min} + \Delta s_2^{max}}{2}$
\EndIf
\EndWhile
\Return $s_{i,j} \leq s_{m,n} - s_1^{max}$, $s_{m,n} \leq s_{i,j} + s_2^{min}$
\end{algorithmic} \label{algo:ContraResol}
\end{algorithm} 
\vspace{0.1in}

\begin{figure}[tb]
    \includegraphics[width=.98\linewidth]{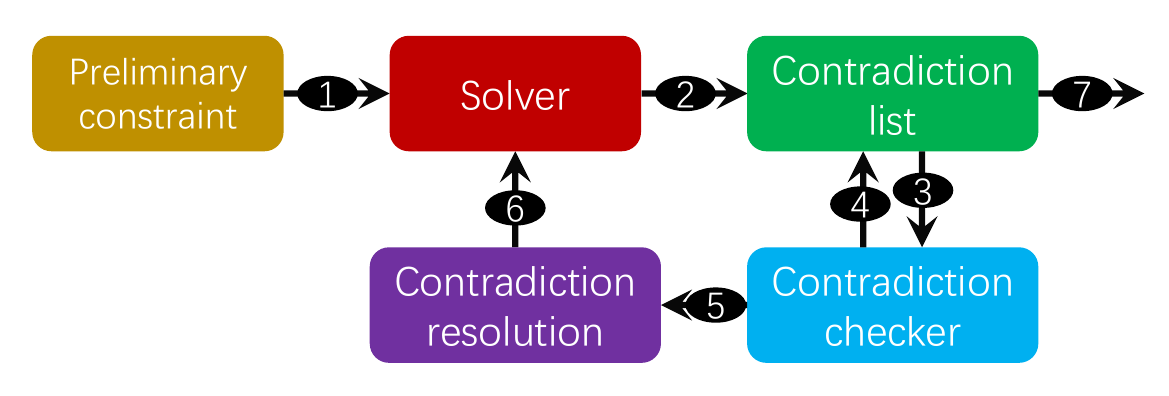}
    \vspace{4pt}
    \caption{Workflow of the contradiction resolution.
      \label{fig:resolution}
    }
\end{figure}

Algorithm~\ref{algo:ContraResol} resolves constraint contradictions through coordinated bisection refinement of slack variables ($\Delta s1$ and $\Delta s2$) for conflicting TYPE-I $\gamma_1$ and TYPE-II $\gamma_2$ constraints.
Initialized with $\Delta s1 \in [0, k]$ (experience prior binary scan adjustments but remain insufficiently tightened) and $\Delta s2 \in [b, MAX]$, the algorithm iteratively tightens both constraints by bisecting their respective threshold ranges, followed by client-ingress mapping validation to verify whether the modified constraints preserve desired client access patterns (lines 3\&7).
If validation fails, the search continues in subranges favoring larger $\Delta s1$ and smaller $\Delta s2$ to resolve the contradictions (lines 6\&10), otherwise, it narrows toward stricter adjustments (lines 4\&8).
The process ends either when overlapping intervals $\Delta s_1^{max} < \Delta s_2^{min}$ indicate resolvable conditions or when disjoint intervals $\Delta s_1^{min} > \Delta s_2^{max}$ confirm irreconcilable contradictions.
Each time, the adaptive binary scan strategically avoids the exact determination of \(\Delta s^*\).
Instead, it employs progressive boundary refinement by continuously narrowing feasibility regions until either the contradictions converge or are demonstrated to be unresolvable.
This dual bisection mechanism ensures minimal constraint relaxation while systematically probing the solution space.

\begin{figure}[tb]
    \includegraphics[width=.96\linewidth]{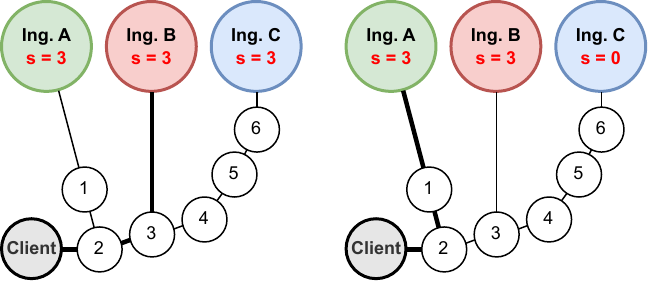}
    \vspace{4pt}
    \caption{Illustration of an ingress shift caused by the ASPP change of a third-party ingress .
      \label{fig:hiddenhand}
    }
\end{figure}

In Figure~\ref{fig:resolution}, we present a systematic constraint solution workflow that begins with the preliminary constraints generated from max-min polling. 
We first use a commercial solver OR-Tools to try to solve the constraint lists (\ding{182}).
If the constraint set is unsolvable, the solver identifies contradictory constraint pairs $\xi_i = (\gamma_{i1}, \gamma_{i2})$ that cannot coexist, forming a contradiction set $\boldsymbol{\Xi} = \{\xi_1, \xi_2, \dots\}$ (\ding{183}). These contradictions are prioritized by their client impact count, then evaluated through a verification process: first assessing whether either constraint in the pair constitutes a tight inequation (precisely bounded by $\Delta s$)(\ding{184}), where any tight constraint pairs automatically designates the contradiction as unresolvable (\ding{185}),  and second attempting resolution via binary scan optimization for untightened pairs (\ding{186}). 
After each resolution attempt, the constraint set is revalidated through solver re-execution (\ding{187}) to propagate resolution effects across interdependent constraints.
Since no additional contradictions would be generated, the entire set $\Xi$ can be processed in one pass.
The workflow terminates when all contradictions are either resolved or definitively classified as unresolvable, yielding an optimized prepending configuration through this closed-loop verification-optimization cycle (\ding{188}). Constraints processed through the binary scan are formally defined as finalized preference-preserving constraints.

\subsection{Deep Dive: Algorithm Design Rationale}
This section discusses the design rationale behind the algorithms used in \sys's operational deployment.

\noindent\textbf{Third-party impact:}When analyzing the results in \scan, we have identified two distinct types of clients that show different reactions to prepending adjustments:
(1) 95.1\% of client groups shift to one ingress when the prepending length of this ingress is tuned to smaller values, which aligns with intuitions and the Bi-ingress competing model in Theorem~\ref{theorem:3}.
(2) Conversely, 4.9\% clients make the ingress shift from B to A when the prepending length of an unrelated third-party C is adjusted. For instance, a client group at $\langle \text{California, AS132203} \rangle$ switches from the \pt in Frankfurt to Ashburn when the prepending length of Malaysia is changed to zero. 

To understand the causes behind the second category's behavior, we delve into an analysis of traceroutes from clients to their respective ingresses under different configurations. We demonstrate this phenomenon with an example in Figure \ref{fig:hiddenhand}. 
In this figure, each circle symbolizes an individual AS, and connecting lines between ASes indicate the propagation of announcements.
In this network with three ingresses A, B, and C. The client whose desired mapping is A receives anycast announcements. 
Suppose we are conducting \scan with $MAX=3$.
In the base mapping (left), ``AS 2'' prefers ingress B when the path to B is the shortest. 
When C is set as 0 (right), paths to C and A are both the shortest, but ``AS 2'' chooses A instead of C, which actually adjusts itself.
This behavior occurs because when AS path lengths are equivalent, ``AS 2'' has a bias for routes via ``AS 1'' rather than those via ``AS 3'' due to lower-tier-breaking metrics (e.g., origin code, MED or router ID)~\cite{cisco2024bgpattributes} of BGP.
Finally, ``AS 2'' will advertise the path originating from ingress A to the client.

\sys is compatible with this case through a new preference-preserving constraint format:

$(p_{m,n} \succ_{c, (s_{m,n} - s_{i,j} \leq \Delta s^*)} p_{i,j}) \land (p_{i,j} \succ_{c, (s_{m,n} - s_{i,j} > \Delta s^*)} p_{m,n})$, 

\noindent where $s_1$ and $s_2$ are no longer required to be $s_{i,j}$ and $s_{m,n}$, exactly matching $p_{i,j}$ and $p_{m,n}$. 
Instead they can be the prepending lengths of any other ingresses.
Note that our optimization formulation intrinsically allows this, i.e., the $\boldsymbol{S}$ in $M_{c_k,p_{i,j}} \geq M_{c_k,p_{m,n}}, \forall P_{i,j} \succ_{c_k,\boldsymbol{S}} P_{m,n}$ can be any prepending length array.
And the input to the solver is still in the same format.

\noindent\textbf{Middle ISP's impact:}
In empirical deployments of \sys, a primary operational concern arises from ISPs dynamically adjusting ASPP configurations --- either by truncating prepending length or reinflating AS paths. 
Actually this type of change on the ASPP configuration of the internal ISPs will not influence the correctness of our preference-preserving constraints,  provided ISP reactions to these configurations remain consistent --- a robustness validated by prior studies\cite{marcos2020path}.
This resilience stems from the system's inherent integration of client response modeling to ASPP changes, which inherently accounts for ISP-driven configuration fluctuations. 
For instance, if a middle ISP shortens the ASPP length to 3, our dynamic parameter tuning from 9 to 3 (if we set $MAX$ as 9) is transparent to clients 
--- no ingress shifts occur.
However, if further tuning ASPP to 2 triggers ingress changes, the system infers a preference-preserving constraint with $\Delta s=7$, which inherently accommodates both operator-initiated tuning and ISP-driven ASPP variations, ensuring stable routing behaviours despite configuration dynamics.

\section{Evaluation}

In this section, we first derive the computational complexity of \sys, and then validate it on the real-world anycast testbed described in \S~\ref{sec:mearsys}. 
Specifically, we aim to answer the following research questions.

\begin{itemize}[itemsep=1pt, topsep=1pt]
\item[RQ1]~How effective is \sys in reducing latency? How much gain is introduced by correctly inferring preference constraints?
\item[RQ2]~Does the optimization objective function value align well with latency? Does \sys correctly infer preference constraints? 
\item[RQ3]~What are the computational and operational complexity of \sys? What are the actual overhead and flexibility in production networks?
\item[RQ4]~How does \sys's optimization efficacy scale with varying numbers of enabled PoPs?
\end{itemize}

\begin{figure*}[tb]
    \subfigure[The proportion of clients exhibiting divergent reactions to the ASPP.]{
        \includegraphics[width=0.32\linewidth]{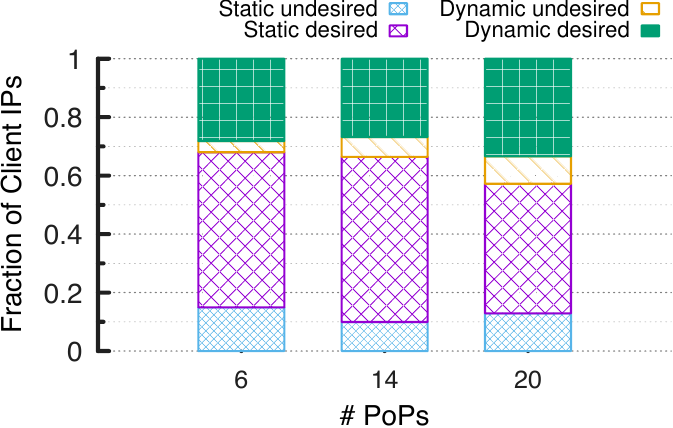}
        \label{fig:division}
    }
    \hspace{0.1cm}
    \subfigure[The distribution of client (groups) by their number of candidate ingresses.]{  
        \includegraphics[width=.32\linewidth]{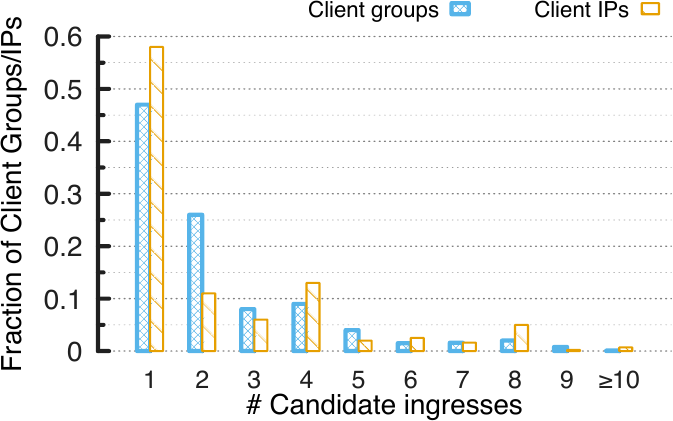}
        \label{fig:cop-route}
    }
    \hspace{0.1cm}
    \subfigure[The accuracy in predicting client accessibility to their desired PoPs.]{
        \includegraphics[width=.32\linewidth]{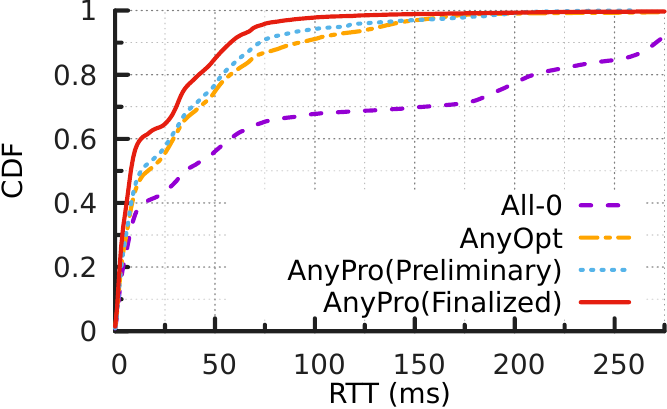}
        \label{fig:rtt}
    }
    \caption{\small (a) \sys can optimize performance for a substantial portion of clients. (b) Most clients and client groups are associated with only a single candidate ingress.
    (c) \sys-optimized configuration substantially outperforms other approaches in terms of RTT.
    }
    \vspace{-4pt}
\end{figure*}


\subsection{Overall Performance (RQ1)} \label{sec:performance}

\subsubsection{Experiment Setup.}
As outlined in \S~\ref{sec:model}, \sys\ requires desired client-to-ingress mappings as input, which are derived by operators from historical data and application-specific requirements; for our evaluation, we use geographical proximity as the primary mapping criterion to approximate latency, and we quantify anycast performance by comparing observed mappings against these geo-optimal ones.
\minyuan{To ensure the correctness of the observed client-ingress mappings, consecutive ASPP adjustments are spaced 10 minutes apart to allow the ASPP updates to fully propagate and stabilize across the global routing table before initiating ICMP probes. This duration is consistent with established inter-domain routing studies \cite{fastroute, zhang2021anyopt}, ensuring that we capture the steady-state routing behavior rather than transient path-hunting effects.
We specify $MAX = 9$ as our practical upper bound for prepending, a value informed by prior studies \cite{marcos2020path} and our empirical observations that transit providers commonly accept AS-path lengths up to this threshold without filtering.}

\noindent\textbf{Baselines.} We compare the following schemes with \sys.
\begin{itemize}[leftmargin=*, itemsep=0pt, parsep=0pt, topsep=0pt]

\item \textbf{All-0}: In this configuration, we enable all available ingress points without AS-path prepending (i.e., $\forall i,j$, $s_{i,j} = 0$).




\item \textbf{AnyOpt}: We additionally implement AnyOpt~\cite{zhang2021anyopt}, an anycast optimization approach that operates by selectively enabling/disabling PoP subsets. 

\item \textbf{AnyPro (Preliminary)}: 
We derive preference constraints from the output of \scan as described in §\ref{sec:model}, obtaining only preliminary constraints since we omit the resolution of the contradictions. From these constraints, we generate a preliminary ASPP configuration where each ingress's ASPP length is either 0 or 9.

\item \textbf{AnyPro (Finalized)}: 
After applying contradiction resolution using the method described in~\S\ref{sec:formulation}, we obtain the finalized preference constraints. In the resulting configuration, the ASPP of each ingress is selected from the discrete set \{0, 1, $\cdots$, 9 \}.

\end{itemize}

\noindent\textbf{Metrics.} We evaluate the optimization using both \textbf{RTT} and \textbf{\lar}, where
\textbf{\lar} is calculated as 
$\frac{\sum_{c_k \in \boldsymbol{C}} \sum_{p_{i,j}\in \boldsymbol{P}} \boldsymbol{M^{*}}_{c_k,p_{i,j}} \cdot \boldsymbol{M}_{c_k,p_{i,j}}}{|C|\cdot|P|}$. 
Since $|C|$ and $|P|$ are constants, this metric is essentially the optimization objective function itself, and we further demonstrate the correlation between these two metrics.
A normalized objective value closer to 1 indicates that the observed mappings are more closely aligned with the desired ones.

Our experiment first quantifies the client rerouting potential through ASPP by measuring how many clients can reach their desired ingresses. Figure~\ref{fig:division} presents the results of three anycast deployments. In the 20-PoP configuration, 57.2\% of clients maintain stable catchment PoPs during \scan, with 44.3\% reaching desired ingresses (static desired) and 12.9\% diverted to undesired ones (static undesired). Among the 42.8\% experiencing ingress shifts, 30.7\% ultimately connect to desired ingresses (dynamic desired) while 9.3\% do not (dynamic undesired). This yields a 77.8\% total \lar (static desired + dynamic desired).

\begin{table}[tb]
\caption{\Lar of the optimized anycast system across different methods, both with peers (w/ peer) and without peers (w/o peer).\label{tab:larscore}}
\centering
\begin{tabular}{ c c c c}
\multirow{2}{*}{\bf Method} & \multicolumn{2}{c}{\bf \LAR} \\\cmidrule(lr){2-3} & {w/o peer} & {w/ peer} \\
\midrule
All-0 & 0.60 & 0.68\\
AnyOpt & 0.66 & 0.76\\
AnyPro (Preliminary) & 0.72 & 0.82\\
AnyPro (Finalized) & 0.76 & 0.85 \\

\end{tabular}
\vspace{4pt}
\end{table}

Table \ref{tab:larscore} presents the \lar for \sys-optimized configurations. The left column shows \lar values excluding clients connected via peering links. For the All-0 deployment, the peer-inclusive configuration demonstrates a 0.08 lower \lar than the transit-only version. Since both configurations consider desired ingresses, this result suggests that substantial intercontinental routing occurs when using transit-only paths. Notably, \sys achieves competitive performance even without finalized preference constraints --- its \lar is merely 0.05 below ``AnyPro (Finalized)''. This effectiveness stems from two key factors: (1) Figure~\ref{fig:cop-route} reveals that most clients face $\leq$2 candidate ingresses, meaning few constraints are needed to ensure desired ingress connectivity, and (2) \sys's optimization strategically prioritizes high-weight constraints during contradiction resolution, preferentially serving the majority client base to maximize overall \lar.



\begin{figure}[t]
        \includegraphics[width=1\linewidth]{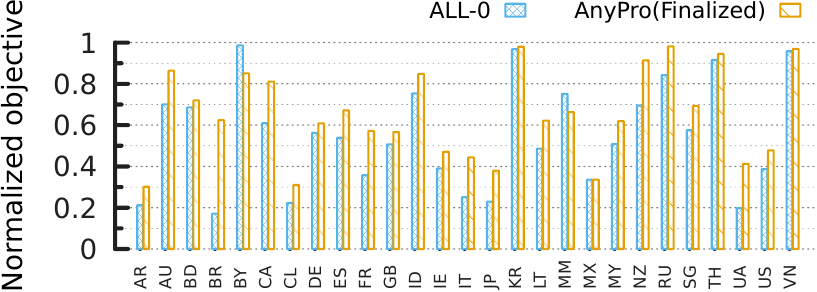}
        \vspace{1pt}
    \caption{\sys-optimized configuration can achieve high \lar for most countries simultaneously.\label{fig:breakdown}}
\end{figure}





\begin{figure*}[tb]
    \begin{minipage}[t]{0.65\textwidth}
    \vspace{-12pt}
    \subfigure[Matching accuracy vs. mean RTT.]{
        \includegraphics[width=0.5\linewidth]{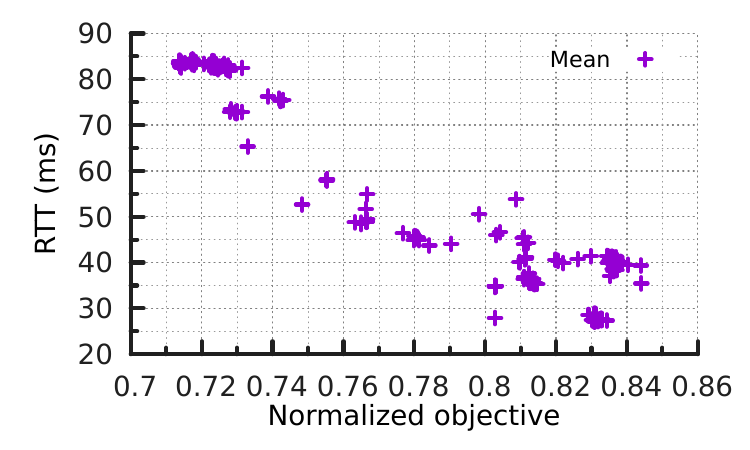}
        \label{fig:mean_rtt}
    }
    \subfigure[Matching accuracy vs. 95th percentile RTT.]{
        \includegraphics[width=0.5\linewidth]{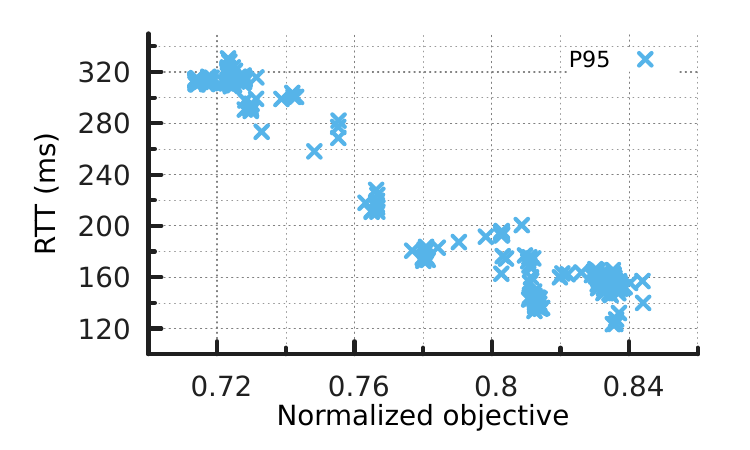}
        \label{fig:p95}
    }
    \caption{
    \minyuan{Correlation Analysis between \sys’s optimization objective (matching accuracy) and RTT performance.}\label{fig:relationship}}
    \end{minipage}
    \hfill
    \begin{minipage}[t]{0.32\textwidth}
    \subfigure{
        \includegraphics[width=0.95\linewidth]{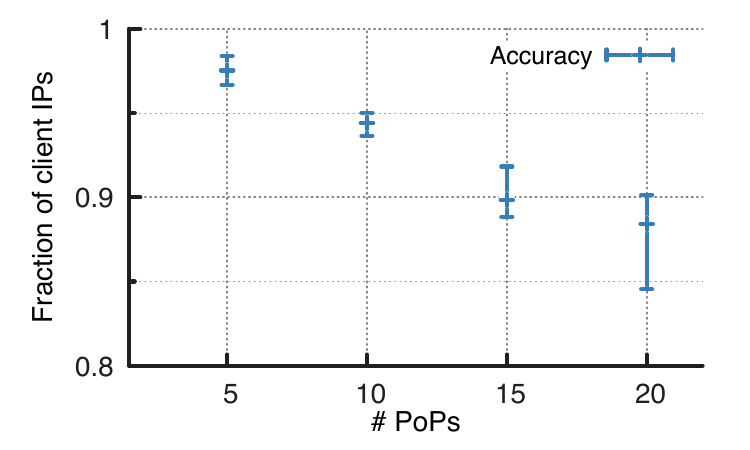}
    }
    \vspace{-0.2cm}
    \caption{The preference-preserving constraints derived by \sys are both complete and accurate.}\label{fig:errorbar}
    \end{minipage}
\end{figure*}

We further assess the RTT of \sys-optimized configurations against alternative deployments. Figure~\ref{fig:rtt} presents the CDF of client RTTs across different setups, showing that ``AnyPro (Finalized)'' achieves significantly reduced tail latency compared to All-0. The 90th percentile RTT improves from 271.2 ms (All-0) 
to 58.0 ms (``AnyPro (Finalized)''). 
This optimal performance stems from the two-stage optimization: AnyOpt first selects an optimal PoP subset, eliminating poorly-performing nodes, and \sys then fine-tunes ASPP values within this subset to precisely steer clients to the lowest-latency ingresses. These dramatic reductions demonstrate the ability of \sys, especially when combined with AnyOpt, to effectively leverage anycast routing for minimal latency.

To better quantify the effectiveness of \sys, we analyze country-level \lar improvements across \sys and AnyOpt deployments, focusing on the 27 countries with the highest transit-connected client populations. Figure \ref{fig:breakdown} reveals that \sys optimization enhances \lar for most countries, with Brazil showing the most dramatic improvement --- increasing from 0.17 to 0.62 under the ``AnyPro (Finalized)'' configuration. Our trace analysis indicates this stems from shifting Brazilian clients from Bangkok to Ashburn PoPs. The sole exception is Myanmar (MM bar), where \lar decreases as its lower traffic volume leads to deprioritization during constraint resolution, redirecting more than 10\% of Ho Chi Minh clients to European PoPs.

While effective, \sys faces limitations when handling contradictory constraints during optimization. The system prioritizes higher-weight client groups at the expense of smaller populations, potentially routing them to suboptimal PoPs. For instance, two competing constraints emerge: (1) 1,388 U.S. clients require $s_{({\text{Frankfurt,Telia}})} \geq s_{({\text{Ashburn,Level3}})} + 9 \geq 9$ to reach Ashburn-Level3, while (2) 467 German clients need $s_{({\text{India,  Airtel}})} \geq s_{({\text{Frankfurt,Telia}})} + 9$.
Given $s_{({\text{India, Airtel}})} \in [0,9]$, the system sets $s_{({\text{Frankfurt, Telia}})} = 0$, forcing the German clients to connect to undesired PoPs. This demonstrates how weight-based prioritization can disadvantage smaller client groups despite their legitimate routing preferences.


\noindent\textbf{Takeaways.} Our results demonstrate that ASPP-based optimization of IP anycast is both practical and effective. The \sys-optimized configuration achieves significant improvements, successfully redirecting the majority of clients to their desired PoPs. Furthermore, performance gains can be substantially enhanced through integration with AnyOpt, combining PoP selection optimization with intelligent path preference configuration to maximize anycast routing efficiency.


\subsection{Optimization Effectiveness  (RQ2)}

\subsubsection{Correlation between \lar and RTT}

To validate the alignment between \sys's optimization objective --- \lar --- and actual latency performance, we systematically analyze the relationship between \lar and RTT. \minyuan{Note that this analysis is specific to \sys's internal configuration space and is intended to verify that maximizing matching accuracy effectively minimizes RTT. }
As illustrated in Figures \ref{fig:mean_rtt} and \ref{fig:p95}, improvements in \lar consistently correspond to reductions in both mean RTT and the 95th-percentile RTT. 
This consistent inverse relationship is further substantiated through statistical analysis, revealing a strong negative correlation between \lar and RTT metrics. 
The Pearson correlation coefficients are approximately $-$0.95 for mean RTT and $-$0.96 for the 95th-percentile RTT\footnote{The Pearson correlation coefficient ranges from -1 to 1, where values approaching -1 indicate a strong inverse linear relationship.}, underscoring the reliability of this trend. 
These findings confirm that \lar serves as a robust and effective predictor of latency performance in anycast networks, capable of accurately reflecting both average and tail-end latency behavior. 
Higher \lar values are consistently associated with enhanced end-to-end latency characteristics across the entire performance distribution, thereby reinforcing its utility as a key metric for network optimization.



\noindent\textbf{Takeaways:} The \lar,  which serves as the optimization objective, exhibits a strong negative correlation with RTT.

\subsubsection{Constraint identification}

We evaluate \sys's effectiveness in identifying sufficient preference-preserving constraints to address RQ2. 

While \sys does not predict exact catchments for all possible configurations --- as exhaustively determining ingress preference orders would be computationally prohibitive --- it efficiently identifies preference-preserving constraints that guarantee client groups reach their desired PoPs by: (1) selecting a random subset of PoPs and disabling others to isolate ASPP-sensitive clients via \scan; (2) deriving constraints using our tie-breaking method (\S\ref{sec:tie-breaking}); and (3) validating accuracy through testing of 10 random ASPP configurations per deployment and comparing predicted versus observed PoP connections.

Figure~\ref{fig:errorbar} demonstrates the prediction accuracy variability across different deployment scales, where we evaluate \sys with 5, 10, 15, and 20 enabled PoPs (including all associated transit providers). The results show that \sys maintains strong predictive performance, particularly in smaller deployments --- achieving over 95\% accuracy with 5 enabled PoPs. However, as the number of PoPs increases, we observe a gradual decline in prediction accuracy for client-to-PoP mappings. This degradation primarily stems from two factors: (1) the quadratic growth in unresolved constraint contradictions as more PoPs are added (despite our binary scan-based resolution mechanism), and (2) external complexities such as uncommon BGP policies~\cite{zhang2021anyopt} and multi-path routing effects. Notably, even at 20 PoPs (our largest test case), \sys maintains 88.5\% accuracy, demonstrating its robustness for real-world anycast networks.



\noindent\textbf{Takeaways.} The preference-preserving constraints discovered by \sys are effective for determining client accessibility to their desired PoP.







\subsection{Complexity Analysis (RQ3)}
The initial \scan pass over the $n$ ingresses incurs a cost of only $\mathcal{O}(n)$. For $|\Xi|$ contradictions, the resolution process performs at most $|\Xi|$ binary scans, each requiring $\mathcal{O}(\log m)$ time, resulting in a total complexity of $\mathcal{O}(|\Xi| \log m)$. Overall, \sys completes in $\mathcal{O}(n + |\Xi| \log m)$ time, which is optimal and several orders of magnitude more efficient than the naïve $\mathcal{O}(m^n)$ approach. In our large-scale deployment scenario --- where $m = 10$, $n = 38$, and approximately 2.4 million IP addresses are considered --- the system identifies 513 distinct preliminary constraints and resolves all contradictions with only 84 ASPP adjustments. Combined with the 76 ASPP adjustments from the initial \scan process (i.e., $38 \times 2$), a full optimization cycle requires a total of 160 ASPP adjustments. \minyuan{This low-frequency, targeted approach ensures that \sys scales to large WANs without compromising Internet control-plane stability.}
With each adjustment taking 10 minutes to ensure convergence, the total computational time per cycle amounts to 26.6 hours --- a significant reduction from the 190 hours required by AnyOPT~\cite{zhang2021anyopt}. 
\minyuan{To validate the robustness of our derived constraints over the 26.6-hour window, 
we randomly sampled 50 non-contradiction preference constraints generated during the max-min polling, and derived a ASPP configurations to satisfy them simultaneously. 48 hours later, we reapplied the configurations.
By measuring the resulting client-ingress mappings again, we observed that 99.2\% of the mappings remained identical to the initial results and the 50 constraints still hold. 
This high degree of consistency indicates that inter-domain routing policies and path preferences exhibit significant persistence over multi-day periods, confirming the consistency of global constraints during each round, guaranteeing the effectiveness of \sys.}

\begin{figure}[t]
\centering
  \includegraphics[width=0.94\linewidth]{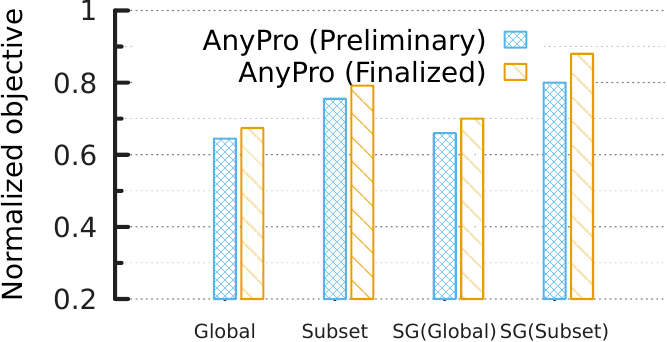}
    \caption{Subset optimization can yield greater performance for certain clients.}\label{fig:subset}
\end{figure}
\vspace{-0.1cm}

\subsection{Subset Optimization (RQ4)} 
 
Our prior analysis reveals that unresolved constraint contradictions disproportionately impact clients in low-traffic regions, reflecting a common operational practice that prioritizes service quality in specific critical areas over global optimization. To mitigate this imbalance, we propose to selectively deploy \sys across curated subsets of PoPs, allowing targeted optimization for priority client groups. We demonstrate that the optimization framework of \sys is particularly well-suited for several practical scenarios: (1) regionally constrained services, (2) regional IP anycast implementations~\cite{Zhou:sigcomm2023}, and (3) mitigation of temporary ingress outages. Using Southeast Asia as a case study, we activate six regional PoPs --- specifically in Malaysia, Manila, Ho Chi Minh City, Singapore, Indonesia, and Bangkok --- along with their transit links, while disabling all others. This configuration creates an isolated test environment in which we rigorously evaluate the efficacy of localized optimization.

Figure~\ref{fig:subset} compares \lar values for Southeast Asian countries under both global and localized optimization configurations. The leftmost bars present results from global optimization --- where all ingresses are enabled --- showing that these countries achieve a relatively lower \lar due to their reduced priority in worldwide routing policies. 
In contrast, localized subset optimization increases the overall \lar from 0.67 to 0.78, representing a 16.4\% improvement. Country-level analysis reveals that Singapore benefits most significantly, with its \lar rising from 0.70 to 0.88 --- a 25.7\% gain. 
A detailed analysis indicates that under global optimization, 16.6\% of Singapore clients are misrouted (12.5\% to U.S. and 4.1\% to Europe), whereas subset optimization successfully directs all traffic to local Singapore PoPs.

\newparagraph{Takeaways:} Localized optimization in Southeast Asia significantly improves routing accuracy, demonstrating \sys's capability for region-specific tuning. This is exemplified by Singapore’s 25.7\% \lar improvement after eliminating all transcontinental misroutes. The results confirm that \sys effectively enforces geographic prioritization while preserving the benefits of anycast.







\section{Discussion}\label{sec:discussion}

This section critically examines the inherent limitations of \sys and proposes promising avenues for future research to address these constraints.

\noindent\textbf{Why not Traceroute?} \label{sec:traceroute}
While traceroute-derived AS-path information could theoretically inform route optimization, a naive approach of comparing AS paths to get the finalized preference constraints faces two critical limitations. 
First, experimental data indicates that traceroute-collected path information often lacks completeness, particularly in documenting intermediate hops.
This makes it difficult to characterize the complete AS path for optimization purposes.
Second, certain ISPs implement BGP regular expression filters that dynamically truncate excessive route prepending — for instance, observed cases where 9$\times$ is compressed to 3$\times$Prepending - rendering direct AS-path length comparisons invalid, as the ISP’s real-time processing of ASPP remains opaque.
Our methodology overcomes these challenges by employing empirical performance evaluation (via max-min polling and binary scan) to derive both preliminary and finalized constraints. 
This approach ensures reliable client-side routing predictability independent of opaque ISP ASPP handling mechanisms.


\begin{figure}[tb]
    \includegraphics[width=0.98\linewidth]{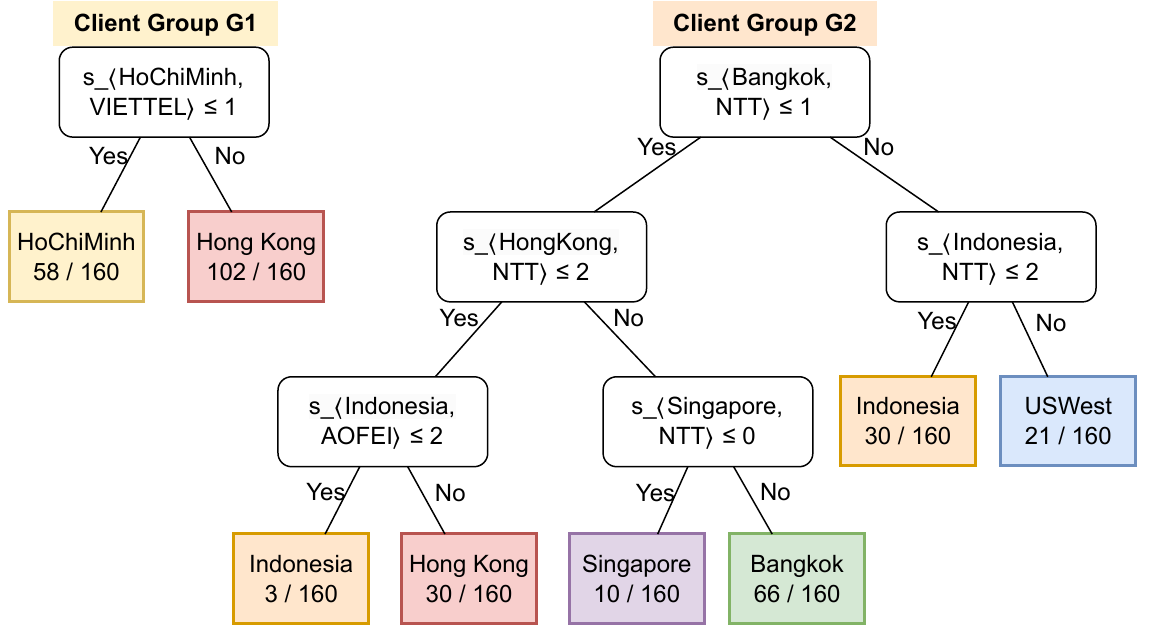}
    \vspace{1pt}
    \caption{Instability of decision tree models during catchment prediction.\label{fig:tree}}
    \vspace{2pt}
\end{figure}

\noindent\textbf{Data-driven catchment modeling and inference.}
\sys employs a deterministic approach for catchment modeling, inference, and optimization. While data-driven methods like machine learning (ML) may appear appealing, we demonstrate that ML catchment inference proves fundamentally unreliable and inefficient. As an illustrative case, we trained decision tree models using 160 random ASPP configurations and their corresponding client-ingress mappings as training data. The models take prepending length arrays as input and predict client-ingress relationships. Figure~\ref{fig:tree} visualizes the resulting tree structures for two representative client groups: G1 in Vietnam (2 candidate ingresses) and G2 in Indonesia (6 candidate ingresses), represented as leaves in their respective trees. This exercise reveals the inherent limitations of probabilistic approaches compared to our deterministic methodology.

The decision tree for G1 predicts with 100\% confidence that clients enter Ho Chi Minh City when $s_{({\text{HoChiMinh,VIETTEL}})}\leq$ 1, and Hong Kong otherwise. However, this proves unreliable when tested with \(s_{({\text{HoChiMinh,VIETTEL}})} = 1\) and $s_{({\text{HongKong, NTT}})}$ $=$ 0, where clients undesiredly connect to Hong Kong. These inconsistencies worsen for complex cases like G2 (with 6 candidate ingresses), and frequently occur when clients ingress through PoPs absent from training data. We identify two root causes: (1) BGP policies are fundamentally deterministic and ill-suited to probabilistic modeling, and (2) random ASPP configurations fail to capture sensitivity and constraint contexts. \sys overcomes these limitations by systematically discovering all possible ingresses and their constraints, guaranteeing optimal configurations. 

More advanced machine learning methods, such as graph neural networks (GNNs)~\cite{ferriol2023routenet} that model global correlations in evolutionary optimization~\cite{ouyang2024learn} and hybrid CNN–GRU–GA pipelines for structured feature learning and parameter search~\cite{ke2025stable}, demonstrate that graph-structured learning can effectively capture complex network dependencies.
While they may exceed the decision tree, and potentially approximate the optimal, they cannot match \sys's efficiency, as every configuration exploration in \sys provides essential information.


\noindent\textbf{Comparison with Alternative BGP Controls.}\label{sec:discuss community}
\minyuan{While BGP communities and announcement scoping (e.g.,  restricting route propagation to specific regions) offer alternative anycast steering, they present challenges in granularity and operational consistency. Unlike binary scoping, which is limited to discrete PoP-level advertisements, \sys leverages the scalar nature of ASPP (prepending $1 \dots 9$ times)) to enable \textit{fine-grained} steering of catchment boundaries without sacrificing global reachability. 
Furthermore, while we acknowledge that higher-priority policies like \texttt{Local-Pref} (often set via communities) can override AS-path length, \sys is specifically designed to empirically isolate \textit{ASPP-sensitive} clients through max-min polling (\S~\ref{sec:model}). Clients behind ISPs with rigid, community-driven policies that mask prepending effects are naturally identified as non-sensitive and excluded from the optimization variables. This focus on the universal AS-path attribute avoids the operational complexity of non-standardized communities across heterogeneous ISPs, though integrating community-based controls for anomaly detection remains a promising direction for future work.}

\noindent\textbf{Peering connections.}
Our anycast network leverages both transit and peering connections, which differ fundamentally in their routing characteristics and performance impacts. While prior work typically separates peer optimization from transit configuration (e.g., AnyOpt's sequential approach), we adopt a distinct strategy by enabling all peering connections before transit optimization. This design reflects two key considerations: (1) Performance benefits --- despite known risks of remote peering~\cite{Castro2014remotepeering,Nomikos2018Peer}, most peering paths in practice offer lower latency than transit alternatives, and (2) Relationship preservation --- frequent prefix announcement changes may violate peering agreements that stipulate route stability. Our approach thus maximizes potential performance gains while maintaining critical peering relationships.


\section{Related works}
\noindent{\bf Anycast catchment measurement and inference.} 
Verfploeter~\cite{Verfploeter2017de} supports inferring any clients' catchment sites as long as the clients are capable of responding to ICMP requests. Leveraging this capacity, researchers have extended its utility to assess client-specific metrics such as RTTs, traceroutes, and ingresses to anycast PoPs~\cite{de2020global,zhang2021anyopt,Many2020Sommese,alfroy2022mvp}. 
Another research approach utilizes RIPE Atlas to send active probes towards an anycast address~\cite{Li2018Internet,Wei2017does,de2017Anycast,calder2015analyzing}. For instance, Zhou \etal~\cite{Zhou:sigcomm2023} infer the anycast catchment by interpreting the penultimate hop from traceroute results, which typically corresponds to the on-site routers situated at the same locale as the anycast PoPs. 
Despite its wide utility in research, the RIPE Atlas method is constrained by the limited availability of probes.
A related research line focuses on inferring catchments for Internet routing~\cite{catchmentinfer-sermpezis-sigmetrics19,singh2021predict}.
Sermpezis \etal~\cite{catchmentinfer-sermpezis-sigmetrics19} suggest utilizing inferred AS-level Internet topology as a means of predicting catchments. However, this methodology does not effectively scale to the magnitudes of anycast networks and cannot guarantee accurate inferences at such scales. Singh \etal~\cite{singh2021predict} train a probabilistic model for Internet routing by feeding a corpus of traceroutes. While insightful, this approach also falls short when applied to a practical anycast networks.


\noindent{\bf Anycast catchment optimization.}
Existing catchment optimization can be broadly classified into two categories: PoP-level~\cite{zhang2021anyopt,Zhou:sigcomm2023,Li2018Internet,fastroute,Sarat2006On} and ingress-level~\cite{Stephen2019Taming,ballani2005towards,Alzoubi2011Association} optimization. (1) For the former, Zhang \etal~\cite{zhang2021anyopt} construct the total preference order of anycast sites for each client. This strategy helps forecast anycast catchments and effectively minimizes RTT, at the expense of time-consuming measurements. Zhou \etal~\cite{Zhou:sigcomm2023} deeply explore the potential of regional anycast, concluding that it can mitigate path inflation and enhance the performance over global anycast in practice.
(2) For the latter, DailyCatch~\cite{Stephen2019Taming} captures the performance shift across varying snapshots, each representing a particular configuration. 
Ballani \etal~\cite{ballani2005towards} suggest connecting all anycast PoPs to a single tier-1 provider, leveraging the intra-domain routing for optimal PoP selection. 
Bertholdo et al.~\cite{bertholdo2020bgp} aim to manage anycast catchment by adjusting BGP communities. However, their work primarily showcases the performance of specific configurations without deriving the optimal one. Their approach thus functions more as a monitoring tool rather than an optimization method.
To compare, \sys constructs a white-box model that encapsulates the critical conditions needed for each ASPP-sensitive client to connect to their expected ingress. 
\sys has been demonstrated to be effective across any subset of ingresses so that it can work well with ingress-level optimization.






\vspace{-2pt}
\section{Conclusion}

This paper introduces \sys, the first production-ready system that optimizes anycast catchment efficiency through strategic AS-path prepending configuration tuning. \sys efficiently characterizes global catchments and performs \scan to identify ASPP-sensitive clients and derives preliminary routing preference constraints. If there arises contradiction among constraints, it conducts iterative contradiction resolution and optimization solving to obtain the optimal global ASPP configuration. Evaluations on a production testbed confirm the effectiveness of \sys with significant latency improvements.

\vspace{-2pt}
\section{Acknowledgment}
Minyuan Zhou, Yuning Chen and Yifei Xu were supported by Alibaba Group through Alibaba Research Intern Program. We also sincerely thank our shepherd Oliver Hohlfeld and all anonymous reviewers for their valuable feedback.

\newpage
\bibliographystyle{plain}
\bibliography{references}

\appendix
\appendix
\section*{Appendix}

\section{Ethical Considerations.} 
We issue the ICMP requests at a reasonably low rate to avoid additional load on the Internet infrastructure. 
Additionally, our BGP announcements are restricted to prefixes used for testing and using our own AS number. 
In system logs, only IP, timestamps, and ingresses are parsed for analysis. User information is not included.
This work raises no other ethical issues.

\section{Infrastructure} \label{appendix:infra}

\begin{table}[t]
\begin{tabular}{l | l}
\hline
Malaysia   & NTT\_2914, AIMS\_24218                      \\ \hline
Madrid    & TATA\_6453                                     \\ \hline
Manila    & PLDT-iGate\_9299, Globe\_4775               \\ \hline
Hong Kong     & PCCW\_3491, NTT\_2914                       \\ \hline
Seoul     & SKB\_9318, TATA\_6453                       \\ \hline
Vancouver    & TATA\_6453                                      \\ \hline
Ashburn    & Level3\_3356, Cogent\_174                   \\ \hline
Moscow    & Rostelecom\_12389, Megafon\_31133           \\ \hline
Chicago    & CenturyLink\_3356, Cogent\_174              \\ \hline
Ho Chi Minh    & VIETTEL\_7552, CMC\_45903                   \\ \hline
California     & NTT\_2914, TATA\_6453                     \\ \hline
Frankfurt   & Telia\_1299, TATA\_6453                     \\ \hline
Bangkok     & TATA\_6453, TrueIntl.Gateway\_38082 \\ \hline
Singapore    & Singtel\_7473, TATA\_6453, PCCW\_3491    \\ \hline
Sydney     & Telstra\_4637, Optus\_7474                  \\ \hline
Toronto    & TATA\_6453                                     \\ \hline
India     & TATA\_4755, Airtel\_9498                    \\ \hline
Indonesia     & NTT\_2914, AOFEI\_135391                    \\ \hline
London     & TATA\_4755, Telia\_1299                   \\ \hline
Tokyo     & NTT\_2914, SoftBank\_17676                  \\ \hline
\end{tabular}

\caption{Transits and ASNs in the testbed for each PoP. Some are listed after a city while others are after a country.}
\label{table:testbed_asn}
\end{table}

Table~\ref{table:testbed_asn} presents the PoPs along with their respective transit providers used in our study.


\begin{figure}[tb]
    \includegraphics[width=.96\linewidth]{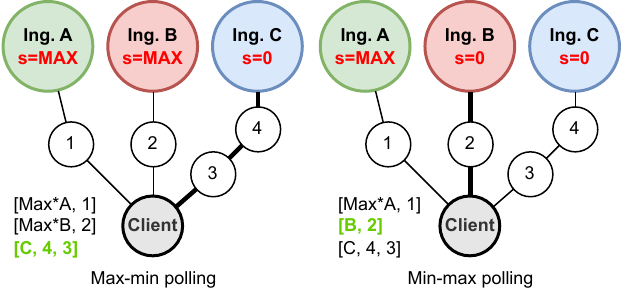}
    \vspace{4pt}
    \caption{Difference between \scan and \textit{min-max polling}. \textit{min-max polling} will not explore the path from C for the client because there will always be a shorter path available, either from A or B.
      \label{fig:90scan}
    }
\end{figure}
\section{Why not Min-Max Polling} \label{appendix:max-min}
The \scan offers a clear method to analyze the individual \pt's effect on traffic catchment areas. 
Figure~\ref{fig:90scan} illustrates the rationale behind choosing a \scan instead of a \textit{min-max polling} (\ie, initially prepending zero to all route announcements and then sequentially reverting to $\textit{Max}$ for all ingresss one by one). 
Specifically, consider a client that receives routing information from three different PoPs and AS path length is the most important route selection criterion. 
When prepending sizes are zero, routes from A and B always have shorter AS paths than that from C. Thus, in the case of \textit{min-max polling}, path originating from C would never be selected by the client because there will always be a preferable route available via either A or B. However, by employing \scan, we are able to explore the scenario where the client might select the path from C when routes from A and B both prepend 3 ASes. 

\section{Reduction from Max-SAT to ASPP Optimization}\label{appendix:nphard}

In this section, we show why and how we reduce NP-hard max-SAT problem into the ASPP optimization.

Taking another look into preference preserving constraints in the formal manner, we can express them as clauses of boolean values.
The preference preserving constraints can be formally expressed as clauses of boolean values.
The preference preserving of each client is a clause, consisting of booleans connected with \textit{conjunction} logic. 
Each boolean is a pairwise preference preserving constraint, which is an inequation involving two prepending sizes. 
We use conjuction logic because for clients with more than two candidate ingresses, to guarantee matching, several inequation constraints need to be satisfied simultaneously, so that the preferred ingress wins over every other ingresses.
To this end, it follows the nature of MAX-SAT problem, which we try to satisfy as much clauses as possible. 




\begin{proof}
Consider an arbitrary Max-SAT instance given by a CNF formula
\[
\Phi = \bigwedge_{k=1}^{m} C_k, \quad \text{with } C_k = \bigvee_{j=1}^{r_k} \ell_{k,j}.
\]
For each clause \(C_k\), create a client \(c_k\); for each literal \(\ell_{k,j}\) in \(C_k\), define a candidate preference \(p_{k,j}\) with weight
\[
\mathbf{M}^*_{c_k,p_{k,j}} = 1.
\]
Formulate the ASPP optimization instance as
\begin{align*}
\text{maximize} & \sum_{c_k \in \boldsymbol{C}} \sum_{p_{i,j}\in \boldsymbol{P}} \boldsymbol{M^{*}}_{c_k,p_{i,j}} \cdot X_{c_k,p_{i,j}}  \tag{1}\\
\text{s.t.} \quad & \sum_{p_{i,j}\in \boldsymbol{P}} X_{c_k,p_{i,j}} = 1, \quad\forall c_k \in \boldsymbol{C} \tag{1a}\\
& X_{c_k,p_{i,j}} \in \{0, 1\}, \quad\forall c_k \in \boldsymbol{C}, \forall p_{i,j}\in \boldsymbol{P}\tag{1b}\\
& X_{c_k,p_{i,j}} \geq X_{c_k,p_{m,n}}, \quad\forall p_{i,j} \succ_{c_k,\boldsymbol{S}} p_{m,n} \tag{1c}\\
& s_{i,j} \in \{0,1, \ldots, MAX\}, \quad\forall s_{i,j} \in \boldsymbol{S} \tag{1d}
\end{align*}

Considering each constraint satisfaction as a clause $C_k$, and reduce the exactly logical negations of each other as $\neg C_k$.
Then any solution to this ASPP instance corresponds to a truth assignment for \(\Phi\) that satisfies as many clauses as possible. Since the reduction is performed in polynomial time and Max-SAT is NP-hard, ASPP optimization is NP-hard.
\end{proof}
\end{document}